\newtheorem{remark}{Remark}
\newtheorem{lemma}{Lemma}
\newcommand{\mv}[1]{\mbox{\boldmath{$ #1 $}}}
\newcommand{\A}{\bm A}
\newcommand{\w}{\bm w}
\newcommand{\R}{\bm R}
\newcommand{\I}{\bm{I}}
\newcommand{\bb}{\bm{b}}
\newcommand{\N}{\mathcal{N}}
\newcommand{\II}{\mathcal{I}}
\newcommand{\F}{\bm F}
\newcommand{\Ss}{\bm S}
\newcommand{\uuu}{\bm u}
\newcommand{\dd}{\bm d}
\newcommand{\h}{\bm h}
\newcommand{\g}{\bm g}
\newcommand{\ttheta}{\mathbf \Theta}
\newcommand{\K}{\mathcal{K}}
\newcommand{\E}{\mathcal{E}}
\begin{document}
\title{ Joint Active and Passive   Beamforming Optimization for  Intelligent Reflecting Surface Assisted SWIPT under QoS Constraints}
\author{\IEEEauthorblockN{Qingqing Wu,  \emph{Member, IEEE} and Rui Zhang, \emph{Fellow, IEEE}
\thanks{ The authors are with the Department of Electrical and Computer Engineering, National University of Singapore, email:\{elewuqq, elezhang\}@nus.edu.sg.}
   } }

\maketitle
\vspace{-0.8cm}
\begin{abstract}
Intelligent reflecting surface (IRS) is a new and revolutionizing technology for achieving spectrum and energy efficient wireless networks. By leveraging massive low-cost passive elements that are able to reflect radio-frequency (RF) signals with adjustable phase shifts, IRS can achieve high passive beamforming gains, which are particularly appealing for improving the efficiency of RF-based wireless power transfer.   Motivated by the above, we study in the paper an IRS-assisted simultaneous wireless information and power transfer (SWIPT) system. Specifically, a set of  IRSs are deployed to assist in the information/power transfer from a multi-antenna access point (AP) to multiple single-antenna information users (IUs) and energy users (EUs), respectively.   We aim to minimize the transmit power at the AP  via jointly optimizing its transmit precoders and the reflect phase shifts at all IRSs, subject to the quality-of-service (QoS) constraints at all users, namely, the individual  signal-to-interference-plus-noise ratio (SINR) constraints at IUs and energy harvesting constraints at EUs. However, this optimization problem is non-convex with intricately coupled variables, for which the existing  alternating optimization approach is shown to be inefficient as the number of QoS constraints increases.
To tackle this challenge,  we first apply proper transformations on the QoS constraints and then propose an efficient iterative algorithm by applying the penalty-based method.
Moreover, by exploiting the short-range coverage of IRSs, we further propose a low-complexity algorithm by optimizing  the phase shifts of all IRSs in parallel.  Simulation results demonstrate the effectiveness of  IRSs for enhancing the performance of  SWIPT systems as well as the  significant performance gains achieved by our proposed algorithms over benchmark schemes. The impact of  IRS on the transmitter-receiver design for SWIPT is also unveiled.
\end{abstract}
\vspace{-5mm}
\begin{IEEEkeywords}
Intelligent reflecting surface, SWIPT, passive beamforming, QoS constraints.
\end{IEEEkeywords}
%\newpage

\section{Introduction}
The  number of Internet-of-Things (IoT) devices (e.g., electronic tablets, sensors, wearables, and so on) worldwide is anticipated to  skyrocket from about 7 billion in 2018 to 22 billion by 2025, laying the foundation of the future  smart home, city and nation. % due to its tremendous applications
  Such a massive  number of wireless devices thus require a scalable solution for  providing them not only ubiquitous communication connectivity  but also perpetual energy supply in the future (say, the  fifth generation (5G) and beyond) wireless network. To this end, the dual use of radio frequency (RF) signals for simultaneous wireless information and power transfer (SWIPT) has recently gained an upsurge of interest \cite{zeng2017communications,clerckx2018fundamentals}. However, an  energy user (EU) typically requires much higher receive power than that of the signal for  an information user (IU), due to their  drastically  different  receiver sensitivities and application requirements in practice  \cite{clerckx2018fundamentals}.
 As such, the low efficiency of wireless power transfer (WPT)  for EUs over long distances has been considered as the performance bottleneck in practical SWIPT systems. Although the massive multiple-input multiple-output (MIMO) technology is able to improve the WPT efficiency considerably by leveraging the large array/beamforming gain at the WPT/SWIPT transmitter \cite{yang2015throughput,yang2018wireless}, the required high complexity, high energy consumption, and high hardware cost are still the main roadblocks to its implementation in practice, especially at the increasingly higher RF (e.g., millimeter wave) frequencies.

Recently, intelligent reflecting surface (IRS)  has been proposed as a promising cost-effective solution  to improve the wireless communication spectrum and energy efficiency  \cite{JR:wu2019IRSmaga,JR:wu2018IRS}. By dynamically adjusting  the phase shifts of the reflected signals via a vast number of low-cost passive elements based on the time-varying environment,    IRS can  achieve fine-grained three-dimensional (3D) passive beamforming gains and thereby reconfigure the wireless propagation channels to be favorable for communication performance optimization. Compared with the conventional active  beamforming/relaying via massive MIMO,  IRS eliminates signal amplification and regeneration, thus enjoying much lower hardware cost, energy consumption, and interference contamination.   As such,  IRSs feature effective short-range/local coverage  and can be densely  deployed with a scalable  cost, yet without the need of sophisticated interference management provided that they are  sufficiently separated from each other \cite{JR:wu2019IRSmaga}.  Furthermore, IRSs are  of low profile and can be practically fabricated to be conformal to mount on arbitrarily shaped surfaces to cater for different  application scenarios.  All these compelling  advantages have spurred a great deal of  interest recently in investigating and building   IRS or its various equivalents \cite{di2019smart,basar19_survey}. In January 2017, a European-funded pilot project ``VISORSURF'' was launched to build the prototype of a software-controlled meta-surface, with the ultimate goal of making the wireless radio propagation environment fully reconfigurable. In November 2018,  NTT DoCoMo and Metawave jointly conducted preliminary experimental tests, which showed that by properly deploying a meta-structure based reflect-array,  the communication quality can be greatly improved over 5G alone, with a range extension of about 35 meters.  To capitalize on this growing  opportunity,  new startup companies, e.g.,   Greenerwave   and Pivotal Commware, have appeared recently to focus on the commercialization of IRS-type technologies  for consumer-grade use cases.

The new research paradigm of IRS-aided wireless communication  has been extensively studied recently \cite{JR:wu2019IRSmaga,JR:wu2018IRS,cui2019secure,guan2019intelligent,chen2019intelligent,dongfang2019,fu2019intelligent,yang2019intelligent, han2018large,huangachievable,yan2019passive}.  In particular, \cite{JR:wu2019IRSmaga} provided a comprehensive overview of IRS-aided  wireless networks. Furthermore, it was shown in  \cite{JR:wu2018IRS} that via jointly optimizing the  active/transmit  and passive/reflect  beamforming in an IRS-aided wireless network, the signal-to-interference-plus-noise ratio (SINR)  performance of all users in the network can be significantly improved, regardless of whether they are aided directly by the IRS or not. The  joint active and passive beamforming design  was also investigated in  other system setups, e.g.,  physical layer security \cite{guan2019intelligent,cui2019secure,chen2019intelligent,dongfang2019}, orthogonal frequency division multiplexing  (OFDM) systems \cite{JR:yang2019irs,JR:beixiong2019irs}, and non-orthogonal multiple access  \cite{fu2019intelligent,yang2019intelligent}.  To implement IRS phase-shifts in practice, some recent works studied the use of IRS with discrete phase shifts \cite{JR:wu2019discreteIRS,huang2018energy} or non-linear reflection amplitude versus phase-shift  \cite{abeywickrama2019intelligent} due to practical hardware constraints.

While the above works  focus on exploiting IRS for enhancing the wireless communication or information transmission performance,  the high  beamforming gain achieved by passive IRS is also appealing for WPT \cite{JR:wu2019IRSmaga,mishra2019channel}. By leveraging  the intelligent reflection over their large aperture, IRSs can help compensate the high RF signal attenuation over long distance and thereby establish effective energy harvesting/charging zones for hot-spot areas  in their proximity,  as illustrated in Fig. 1.  This is of great practical  significance for efficiently  extending the coverage of WPT and realizing the envisioned battery-free IoT networks in the future. To reap this benefit,  the weighted sum-power and sum-rate optimization problems in IRS-aided SWIPT systems were recently studied in \cite{wu2019weighted} and \cite{pan2019intelligent}, respectively. Although the designs in the above works can deal with the performance fairness issue among the users to a certain extent by e.g., adjusting their weights in the correspondingly formulated optimization problems, it remains undressed how  a given set of user quality-of-service (QoS) requirements on the individual SINRs  for IUs and the individual harvested energy amounts for EUs can be efficiently achieved. Furthermore, from the optimization perspective,  the prior works (e.g., \cite{cui2019secure,guan2019intelligent,chen2019intelligent,dongfang2019,fu2019intelligent,huangachievable,yang2019intelligent,huang2018energy,wu2019weighted,pan2019intelligent}) all adopted an alternating optimization approach by successively optimizing the transmit precoders and the IRS's phase shifts in an iterative manner, which, however,  becomes highly inefficient  as the number of QoS constraints increases. This is due to the fact  that by fixing a subset of the optimization variables (e.g.,  transmit precoders), the feasible set of the remaining  variables (IRS phase shifts) is severely reduced under  a large number of QoS constraints coupled together, thus rendering this approach easily to get stuck at undesired suboptimal   solutions.

\begin{figure}[!t]
\centering
\includegraphics[width=0.65\textwidth]{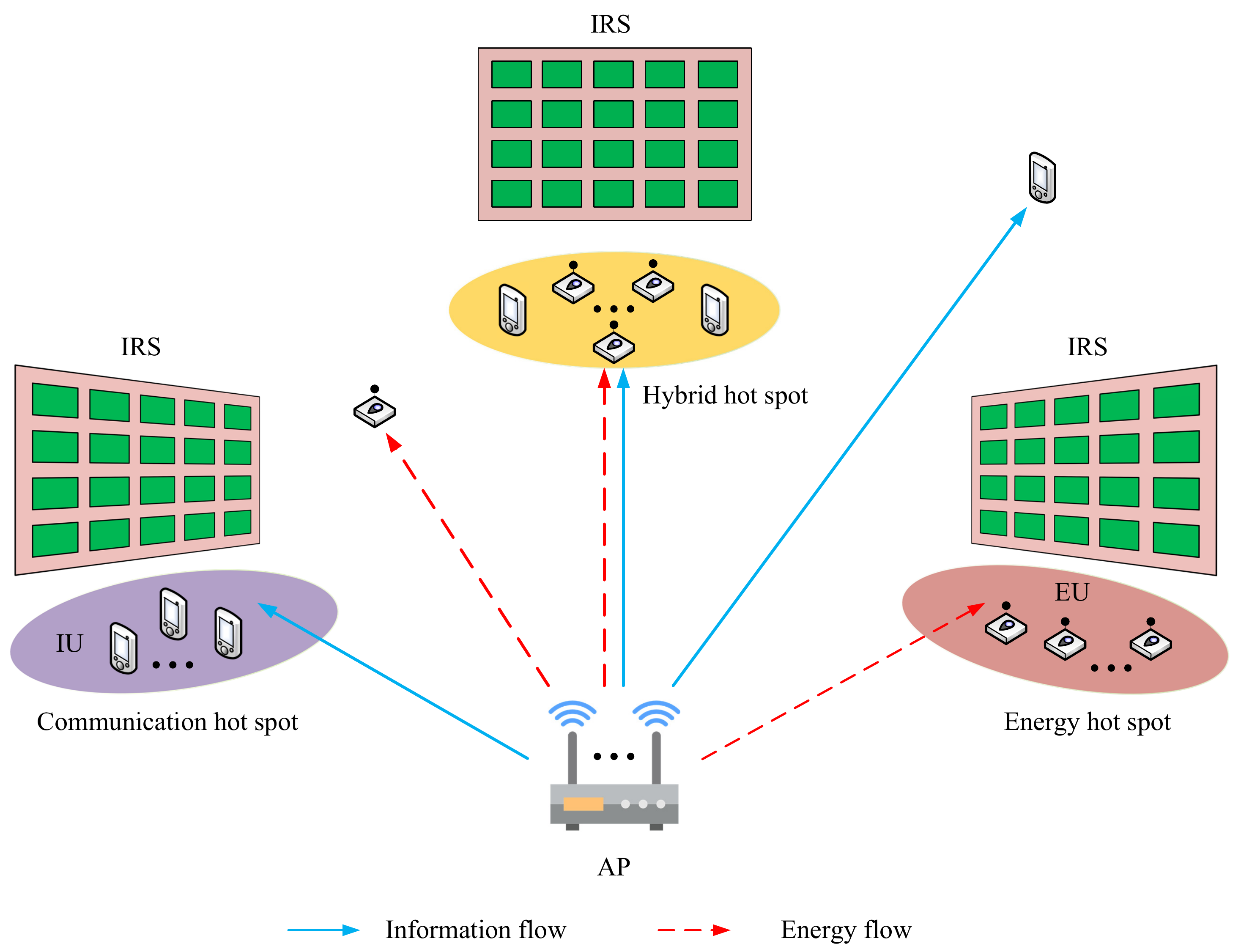}\vspace{-0.25cm} % IRS_model4
\caption{A SWIPT system assisted by multiple IRSs. } \label{system:model}\vspace{-0.6cm}
\end{figure}

Motivated by the above,  we study in this paper a SWIPT system assisted by multiple IRSs as shown in Fig. \ref{system:model}. Specifically, IRSs are  deployed in the hot-spot areas with high density of IUs and/or EUs to enhance their communication rate/harvested energy and thereby reduce the transmit power consumption at the access point (AP). As compared to  \cite{wu2019weighted} and \cite{pan2019intelligent} which considered the weighted sum-power maximization of EUs or weighted sum-rate maximization of IUs,  we study the QoS-constrained joint active and passive  beamforming design  in this paper, which is more challenging to solve and has not been addressed yet in the literature  to our best knowledge. Specifically, we aim to minimize the total transmit power required at the AP subject to the individual SINR constraints at IUs and energy harvesting constraints at EUs. Instead of applying the alternating optimization as in \cite{wu2019weighted} and \cite{pan2019intelligent}, we propose a novel approach to solve the new QoS-constrained joint active and passive beamforming optimization problem. To be specific, we first apply proper transformations to decouple the QoS constraints and then show that the resultant problem can be efficiently solved by jointly applying   the penalty-based and  block coordinate descent methods.  To further reduce the computational complexity, we propose to separate the phase shifts and transmit precoders optimization. As a result, we devise an alternative low-complexity algorithm that admits parallel  passive beamforming optimization for all IRSs, by exploiting their local coverage.  Numerical results show  that significant transmit power saving can be achieved at the AP   by deploying IRSs while meeting the users'   QoS requirements. Moreover, the performance gains of the proposed algorithms against other heuristically designed benchmark schemes are also shown under various practical setups, and new insights on IRS-assisted SWIPT transceiver design are also drawn.

The rest of this paper is organized as follows. Section II introduces the system model and the problem formulation for the SWIPT system aided by multiple IRSs. In Sections III and IV, we propose a penalty-based iterative algorithm and a low-complexity algorithm, respectively. Section V presents numerical results to evaluate  the performance of  the proposed algorithms. Finally, we conclude the paper  in Section VI.

\emph{Notations:} Scalars are denoted by italic letters, vectors and matrices are denoted by bold-face lower-case and upper-case letters, respectively. $\mathbb{C}^{x\times y}$ denotes the space of $x\times y$ complex-valued matrices. For a complex-valued vector $\bm{x}$, $\|\bm{x}\|$ denotes its Euclidean norm and $\text{diag}(\bm{x})$ denotes a diagonal matrix with each diagonal entry being the  corresponding entry in $\bm{x}$. The distribution of a circularly symmetric complex Gaussian (CSCG) random vector with mean vector  $\bm{x}$ and covariance matrix ${\bm \Sigma}$ is denoted by  $\mathcal{CN}(\bm{x},{\bm \Sigma})$; and $\sim$ stands for ``distributed as''. For a square matrix $\Ss$, ${\rm{tr}}(\Ss)$ and $\Ss^{-1}$ denote its trace and inverse, respectively, while $\Ss\succeq \bm{0}$ means that $\Ss$ is positive semi-definite, where $\bm{0}$ is a zero matrix of proper size.  For any general matrix $\A$, $\A^H$,  ${\rm{rank}}(\A)$, and $\A(i,j)$ denote its conjugate transpose, rank, and $(i,j)$th entry, respectively. $\I_M$  denotes an identity matrix  of size $M \times M$. $\mathbb{E}(\cdot)$ denotes the statistical expectation. $ \mathrm{Re}\{\cdot\}$ denotes the real part of a complex number. %For a set $\mathcal{K}$, $|\mathcal{K}|$ denotes its cardinality.

\section{System Model and Problem Formulation}
\subsection{System Model}
\vspace{-0.1cm}
As shown in Fig. \ref{system:model}, we consider an IRS-assisted wireless system where multiple IRSs  are deployed to assist in the SWIPT from the AP with $M$ antennas to two sets of single-antenna users, i.e.,  IUs  and EUs, denoted by $\K_{\II}=\{1, \cdots,K_{I}\}$ and $\K_{\E}=\{1, \cdots,K_{E}\}$, respectively. We assume that there are $L$ IRSs in total, indexed by $1, \cdots, L$, with  the $\ell$th IRS consisting of $N_\ell$ reflecting elements (or equivalently subgroups of adjacent elements \cite{JR:yang2019irs}).  Thus the total number of reflecting elements is given by $N= \sum^{L}_{\ell=1}N_{\ell}$ with the set of all elements denoted by $\mathcal{N}=\{1, \cdots, N\}$.    In practice, each IRS is usually attached with a smart controller that controls the phase shifts of its reflecting elements in real time and also communicates with the AP via a separate wireless link  for coordinating transmission and exchanging information on e.g. channel knowledge \cite{JR:wu2019IRSmaga}.
 For simplicity, we consider linear transmit precoding at the AP and assume that each IU/EU is assigned with one individual information/energy beam without loss of generality. Thus, the transmitted signal from the AP can be expressed as
\begin{align}
\mv{x} = \sum_{i\in {\mathcal{K_I}}}{\mv w}_i s_i^{\rm{I}} +\sum_{j\in {\mathcal{K_E}}}{\mv v}_j s_j^{\rm{E}}, \label{equa:jnl:1}
\end{align}
where ${\mv w}_i\in {\mathbb C}^{M\times 1}$ and ${\mv v}_j\in {\mathbb C}^{M\times 1}$ are the precoding vectors for IU $i$ and EU $j$, while $s_i^{\rm{I}}$ and $s_j^{\rm{E}}$ denote the information-bearing and energy-carrying signals, respectively. For information signals $s_i^{\rm{I}}$'s, they are assumed to be independent and identically distributed (i.i.d.) CSCG random variables with zero mean and unit variance, i.e., $s_i^{\rm{I}} \sim \mathcal{CN}(0,1), \forall i\in \mathcal{K_{I}}$. In contrast, since  energy signals $s_j^{\rm{E}}$'s do not carry any information, they can be any arbitrary random signals provided that their power spectral densities satisfy certain microwave radiation regulations \cite{zeng2017communications}. Without loss of generality,  we assume that  $s_j^{\rm{E}}$'s are  independently generated from an arbitrary distribution with  $\mathbb{E}\left(|s_j^{\rm{E}}|^2\right)=1,\forall j\in \mathcal{K_{E}}$. As a result, the total transmit power required at the AP is given by
\begin{align}
\mathbb{E}(\mv{x}^H\mv{x}) = \sum_{i\in {\mathcal{K_I}}}\|{\mv w}_i \|^2 +\sum_{j\in {\mathcal{K_E}}}\|{\mv v}_j\|^2.
\end{align}
%% are the information-bearing signal for IU $i$ and energy-carrying signal  for EU $j$, respectively.

Since IRS elements have no transmit RF chains, we consider a time-division duplexing (TDD) protocol for uplink and downlink transmissions and assume channel reciprocity for the channel state information (CSI) acquisition in the downlink based on the uplink training.  %To facilitate channel estimation, the elements at IRS are assumed to be equipped with receive RF chains.  As such, each element at IRS has two operation modes, i.e., receiving mode for channel estimation and reflecting mode for information/energy transmission.JR:yang2019irs,
To characterize the maximum performance gain brought by IRS in this paper, we assume that the CSI of all channels involved is perfectly known at the AP for the algorithm design  in each
channel coherence time, based on the various channel acquisition methods as discussed in \cite{JR:wu2019IRSmaga}.\footnote{ In general, there are two main approaches for the IRS-involved channel acquisition, depending on whether the IRS elements are equipped with receive RF chains or not \cite{JR:wu2019IRSmaga}.  For the first approach with receive RF chains, conventional channel estimation methods can be applied  for the IRS to estimate the channels of the AP-IRS and IRS-user links, respectively. In contrast, for the second approach without receive RF chains at the IRS, the IRS reflection patterns can be designed together with the uplink pilots to estimate the concatenated AP-IRS-user channels  \cite{JR:yang2019irs,JR:beixiong2019irs}. The proposed beamforming designs in this paper are applicable with both the above channel estimation methods.}
In addition, the quasi-static flat-fading  model is assumed for all channels, while the extension to the more general frequency-selective fading channels is left for our future work. Denote by $\bm{h}^H_{d,i}\in \mathbb{C}^{1\times M}$ and  $\bm{h}^H_{r,i}(\ell)\in \mathbb{C}^{1\times N_\ell}$  the baseband equivalent channels from the AP to IU $i$ and the $\ell$th IRS to IU $i$, respectively. Their counterpart channels for EU $j$ are denoted by $\bm{g}^H_{d,j}$ and  $\bm{g}^H_{r,j}(\ell)$, respectively, and the channel  from the AP to the $\ell$th IRS is denoted by $\bm{F}(\ell)\in \mathbb{C}^{N_{\ell}\times M}$.  Let  $\ttheta(\ell)  = \text{diag} (\beta_1 e^{j\theta_1}, \cdots, \beta_{N_{\ell}} e^{j\theta_{N_{\ell}} })$ denote the reflection-coefficient matrix at the  $\ell$th IRS, where  $\beta_n \in [0, 1]$ and $\theta_n\in [0, 2\pi)$ are the reflection amplitude and  phase shift of the $n$th element, respectively \cite{JR:wu2019IRSmaga}. Since  it is costly  to implement independent control of the reflection amplitude and phase shift in practice, each element is practically favorable to be  designed to maximize the  signal reflection for simplicity \cite{JR:wu2019IRSmaga,nayeri2018reflectarray,kaina2014shaping}. As such, we assume that ideally  $\beta_n=1$, $\forall n\in \N$, in the sequel of this paper.  The  signal received at IU $i$  from both the AP-user and AP-IRS-user channels can be  expressed as %\vspace{-0.4em}
%\begin{footnotesize} %%  Let ${\bm{h}}^H_i= \bm{h}^H_{r,i}\ttheta\bm{F}+\bm{h}^H_{d,i}$ and $\g^H_j = \bm{g}^H_{r,j}\ttheta \bm{F} +  \bm{g}^H_{d,j}$.
\begin{align}\label{Sec:II:signal}
{y}_i^{\rm{I}} =\left( \sum^{L}_{\ell=1}\bm{h}^H_{r,i}(\ell)\ttheta(\ell)\bm{F}(\ell)+\bm{h}^H_{d,i}  \right)\mv{x} + z_i,  i\in \mathcal{K_{I}},
\end{align}
%\end{footnotesize}
where $z_i\sim \mathcal{CN}(0,\sigma_i^2)$ is the i.i.d. Gaussian noise at the receiver of IU $i$. By using a compact form, \eqref{Sec:II:signal} can be rewritten as
\begin{align}\label{Sec:II:signal2}
{y}_i^{\rm{I}} =(\bm{h}^H_{r,i}\ttheta\bm{F}+\bm{h}^H_{d,i})\mv{x} + z_i,  i\in \mathcal{K_{I}},
\end{align}
where $\bm{h}^H_{r,i}\in \mathbb{C}^{1\times N}$, $\ttheta  \in \mathbb{C}^{N\times N} $, and $\bm{F}\in \mathbb{C}^{N\times M}$ are respectively given by
\begin{equation}\label{Sec:II:signal3}
%\bm{h}_{r,i} = [\bm{h}^H_{r,i}(1), \cdots, \bm{h}^H_{r,i}(L)  ],
\bm{h}_{r,i} =\begin{bmatrix}
\bm{h}_{r,i}(1)  \\
\cdots \\
\bm{h}_{r,i}(L) \\
\end{bmatrix},~~
\ttheta =\begin{bmatrix}
\ttheta(1)& {\bm 0}  &{\bm 0} \\
 {\bm 0}  &\cdots & {\bm 0} \\
{\bm 0} & {\bm 0}  & \ttheta(L) \\
\end{bmatrix},~~
\F=\begin{bmatrix}
\F(1)  \\
\cdots \\
\F(L)\\
\end{bmatrix}.
\end{equation}
%\begin{remark}\emph
{As shown in \eqref{Sec:II:signal} and \eqref{Sec:II:signal2},  the effective channels of IUs (similarly for EUs) in the case of multiple IRSs distributed in different locations can be equivalently expressed as that in the case with a single (larger-size)   IRS.  However, if the multiple  IRSs are well separated in practice, $\bm{h}_{r,i}$ would be a sparse channel vector since each IRS has only limited signal coverage due to its passive reflection. This property  will be exploited in Section \ref{IV} to propose a low-complexity algorithm.}
%\end{remark}

Since energy beams carry no information but instead pseudorandom signals whose waveforms can be assumed to be known at both the AP and each IU before data transmission, we assume that their caused  interference can be cancelled at each IU, similarly as in \cite{xu2014multiuser}.  This facilitates us in characterizing the fundamental performance limit of SWIPT systems as well as  studying the effect of IRS on the energy beamforming.
%To simplify the complexity of IUs, we assume that they do not possess the capability of cancelling the interference caused by energy signals
Thus,  the SINR of IU $i$ is given by
\begin{align}\label{eq:SINR:type1}
\text{SINR}_i = \frac{|\bm{h}^H_{i} \bm{w}_i |^2}{\sum\limits_{ k\neq i, k\in \K_{\II} }| \bm{h}^H_{i}  \bm{w}_k |^2 + \sigma^2_i},   i\in \mathcal{K_{I}},
%\frac{|( \bm{h}^H_{r,i}\ttheta \bm{F}+\bm{h}^H_{d,i})\bm{w}_i |^2}{\sum_{j\neq i, j\in \K_{\II} }|( \bm{h}^H_{r,i}\ttheta \bm{F}+\bm{h}^H_{d,i})\bm{w}_j |^2 +  \sigma^2_i}, \forall i \in \K_{\II}, .
\end{align}
where $ \bm{h}^H_{i} =\bm{h}^H_{r,i}\ttheta\bm{F}+\bm{h}^H_{d,i}$.
%For Type-II receivers, we assume that the IUs possess the capability of cancelling the interference caused by energy signals. Thus, the SINR is given by
%\begin{align}\label{eq:SINR:type2}
%\text{SINR}^{\rm (II)}_i = \frac{|{\bm{h}}^H_i\bm{w}_i |^2}{\sum\limits_{ k\neq i, k\in \K_{\II} }|{\bm{h}}^H_i\bm{w}_k |^2   + \sigma^2_i}.
%%\frac{|( \bm{h}^H_{r,i}\ttheta \bm{F}+\bm{h}^H_{d,i})\bm{w}_i |^2}{\sum_{j\neq i, j\in \K_{\II} }|( \bm{h}^H_{r,i}\ttheta \bm{F}+\bm{h}^H_{d,i})\bm{w}_j |^2 +  \sigma^2_i}, \forall i \in \K_{\II}, .
%\end{align}
On the other hand, by ignoring the noise power, the received RF power at EU $j$, denoted by $Q_j$, is given by
\begin{align}\label{EH:energy}
Q_j&= \sum\limits_{i\in\mathcal{K_I}}|\g^H_j {\mv w}_i|^2  + \sum \limits_{m\in\mathcal{K_E}}|\g^H_j {\mv v}_m|^2, j\in \mathcal{K_{E}},
\end{align}
where $\g^H_j = \bm{g}^H_{r,j}\ttheta \bm{F} +  \bm{g}^H_{d,j}$.
\vspace{-0.3cm}
\subsection{Problem Formulation}
Let $\bm{\theta}= [\theta_1, \cdots, \theta_N]$.  In this paper, we aim to minimize the total transmit power required at the AP subject to the individual SINR constraints at IUs and energy harvesting constraints at EUs via joint optimization of the transmit precoders at the AP and  phase shifts at all IRSs. Accordingly, the optimization problem is  formulated as
\begin{align}
\text{(P1)}: \min_{\{\bm{w}_i\}, \{\bm{v}_j\}, \bm{\theta}} & \sum_{i\in \K_{\II}}\|\bm{w}_i\|^2 + \sum_{j\in \K_{\E}}\|\bm{v}_j\|^2  \label{eq:obj}\\
\mathrm{s.t.}
~~~  &\frac{|(\bm{h}^H_{r,i}\ttheta\bm{F}+\bm{h}^H_{d,i}) \bm{w}_i |^2}{\sum\limits_{ k\neq i, k\in \K_{\II} }| (\bm{h}^H_{r,i}\ttheta\bm{F}+\bm{h}^H_{d,i} )  \bm{w}_k |^2 + \sigma^2_i} \geq \gamma_i, \forall i \in \K_{\II}, \label{P1:SINRconstrn}\\
&\sum\limits_{i\in\mathcal{K_I}}| ( \bm{g}^H_{r,j}\ttheta \bm{F} +  \bm{g}^H_{d,j}) {\mv w}_i|^2  + \sum \limits_{m\in\mathcal{K_E}}| (  \bm{g}^H_{r,j}\ttheta \bm{F} +  \bm{g}^H_{d,j}  ){\mv v}_m|^2 \geq {E}_{j}, \ \forall j\in \mathcal{K_{E}}, \label{P1:EHconstrn}\\
& 0\leq \theta_n \leq 2\pi, \forall n\in \mathcal{N}, \label{P1:phase:constraints}
\end{align}
where $\gamma_i > 0$ and ${E}_{j}>0$ are the minimum SINR and RF receive power requirements of IU $i$ and EU $j$, respectively. From \eqref{P1:EHconstrn}, one can observe that with the presence of  information beams $\bm{w}_i$'s,  dedicated energy beams $\bm{v}_i$'s may not be needed since it  is possible to meet the energy harvesting constraints  by jointly  optimizing the IRSs' phase shifts and information beams only, i.e., $\sum\limits_{i\in\mathcal{K_I}}| ( \bm{g}^H_{r,j}\ttheta \bm{F} +  \bm{g}^H_{d,j}) {\mv w}_i|^2 \geq {E}_{j}$, thus potentially  simplifying the transmitter (energy beamforming) and receiver (energy signal cancellation) designs as compared to the conventional MIMO SWIPT system without IRSs \cite{xu2014multiuser}, as will be shown later in Section V.   Note that (P1) is a non-convex optimization problem in general with the transmit precoders and IRSs' phase shifts intricately coupled in the QoS constraints.
%In particular, the variation of common $\ttheta$ affects the achievable SINR and RF receive power at all users and may easily result in an undesirable solution or even an infeasible solution.
 Generally, there is no standard method for solving such non-convex optimization problems optimally. One commonly used method  is to apply the alternating optimization to  (P1)  by iteratively  optimizing each of the transmit precoders and phase shifts with the other being fixed, as in  \cite{cui2019secure,guan2019intelligent,chen2019intelligent,dongfang2019,fu2019intelligent,huangachievable,yang2019intelligent,huang2018energy,wu2019weighted,pan2019intelligent}.  However, such alternating optimization method becomes inefficient  as the number of QoS constraints increases in (P1) since it is prone to getting  trapped at undesired suboptimal   solutions due to the more stringent coupling among  the variables  (as will be shown later in Section V). As such, in this paper, we propose a new penalty-based algorithm  to solve (P1) and  show that by applying proper reformulations to the QoS constraints in (P1),  this problem  can be efficiently solved with high-quality suboptimal  solutions.

\vspace{-0.2cm}
\section{Penalty-based Algorithm}\label{centralized:alg}
In this section, we propose a two-layer  penalty-based algorithm to solve (P1).  Specifically, the inner layer solves a penalized optimization problem  by applying the  block coordinate descent method while the outer layer updates the penalty coefficient, until the convergence is achieved.
\vspace{-0.2cm}
\subsection{Problem Reformulation for Decoupling QoS Constraints}
\vspace{-0.0cm}
The main difficulty for solving  (P1) lies in  the  QoS constraints that are  coupled in \eqref{P1:SINRconstrn} and  \eqref{P1:EHconstrn}. The key to tackle them is by  introducing new auxiliary variables to decouple them,  based on which (P1) can be efficiently solved by solving a series of simplified subproblems only.

 To this end,  let ${\bm{h}}^H_i\bm{w}_k=x_{i,k}$,   ${\bm{g}}^H_j\bm{w}_i=s_{j,i}$, and   ${\bm{g}}^H_j\bm{v}_m=t_{j,m}$, $i, k\in \K_{\II}$,  $j, m\in \K_{\E}$. Then the SINR and energy harvesting constraints can be respectively expressed as
\begin{align}
&\frac{|x_{i,i}|^2}{\sum\limits_{ k\neq i, k\in \K_{\II} }|x_{i,k} |^2 +  \sigma^2_i}
 \geq \gamma_i, \forall i \in \K_{\II},  \label{P1:new:SINRconstrn} \\
&\sum\limits_{i\in\mathcal{K_I}}|s_{j,i}|^2  + \sum \limits_{m\in\mathcal{K_E}}|t_{j,m}|^2 \geq {E}_{j}, \ \forall j\in \mathcal{K_{E}}.  \label{P1:new:EHconstrn}
\end{align}
%Let $\X=[{x_{i,k}}],  \Ss=[{s_{j,k}}]$, and $\Tt=[{t_{i,j}}] $.
By replacing \eqref{P1:SINRconstrn} and \eqref{P1:EHconstrn} with   \eqref{P1:new:SINRconstrn} and \eqref{P1:new:EHconstrn}, (P1) is equivalently transformed to
\begin{align}
\text{(P2)}: ~\min_{\{\bm{w}_i\}, \{\bm{v}_j\}, \bm{\theta}}& ~ \sum_{i\in \K_{\II}}\|\bm{w}_i\|^2 + \sum_{j\in \K_{\E}}\|\bm{v}_j\|^2  \label{eq:obj}\\
\mathrm{s.t.}~~~%&\sum_{i\in \K_{\II}}\|\bm{w}_i\|^2 + \sum_{i\in \K_{\E}}\|\bm{w}_j\|^2\leq P,  \\
& \frac{|x_{i,i}|^2}{\sum\limits_{ k\neq i, k\in \K_{\II} }|x_{i,k} |^2 +  \sigma^2_i}
 \geq \gamma_i, \forall i \in \K_{\II},        \label{P2:new:SINRconstrn}      \\
& \sum\limits_{i\in\mathcal{K_I}}|s_{j,i}|^2  + \sum \limits_{m\in\mathcal{K_E}}|t_{j,m}|^2 \geq {E}_{j}, \ \forall j\in \mathcal{K_{E}},  \label{P2:new:EHconstrn} \\
 & {\bm{h}}^H_i\bm{w}_k=x_{i,k},  i, k\in \K_{\II},  \label{P2:equality1}  \\
 &{\bm{g}}^H_j\bm{w}_i=s_{j,i}, {\bm{g}}^H_j\bm{v}_m=t_{j,m}, i\in \K_{\II}, j, m\in  \K_{\E},    \label{P2:equality2}  \\
& 0\leq \theta_n \leq 2\pi, \forall n\in \mathcal{N}. \label{phase:constraints}
\end{align}
Although (P2) is still a non-convex optimization problem, the optimization variables  in constraints \eqref{P2:new:SINRconstrn} and \eqref{P2:new:EHconstrn} are fully decoupled and exclusive for different IUs and EUs, i.e., no two constraints involve a common variable. More importantly, such a transformation facilitates in updating  the optimization variables in parallel, as will be detailed in the next subsection. Note that the coupling on the transmit precoders and phase shifts in (P2) are still preserved by the newly added equality constraints in  \eqref{P2:equality1} and  \eqref{P2:equality2}. To overcome them,  we exploit the gist  of the penalty-based  methods in  \cite{bertsekas1999nonlinear,Boyd,shi2016joint} by integrating such constraints  into the objective function of (P2). Specifically, we convert the equality constraints in  \eqref{P2:equality1} and  \eqref{P2:equality2} into quadratic functions and then add them as a penalty term in the objective function of (P2), yielding the following optimization problem
\begin{align}
\text{(P3)}:~\min_{\{\bm{w}_i\}, \{\bm{v}_j\}, \bm{\theta}, \{x_{i,k},s_{j,i},t_{j,m}\} }  &  \sum_{i\in \K_{\II}}\|\bm{w}_i\|^2  + \sum_{j\in \K_{\E}}\|\bm{v}_j\|^2 +\frac{1}{2\rho}\left(  \sum_{i\in  \K_{\II} }  \sum_{k\in  \K_{\II}}  |{\bm{h}}^H_i\bm{w}_k-x_{i,k}|^2  \right.  \nonumber \\
                 & \left.  + \sum_{j\in  \K_{\E} }  \sum_{i\in  \K_{\II}}   |{\bm{g}}^H_j\bm{w}_i-s_{j,i}|^2 + \sum_{j\in  \K_{\E} }  \sum_{m\in  \K_{\E}}  |{\bm{g}}^H_j\bm{v}_m-t_{j,m}|^2 \right)  \label{P3:obj}\\
\mathrm{s.t.}~~~~~~~~~~
& \frac{|x_{i,i}|^2}{\sum\limits_{ k\neq i, k\in \K_{\II} }|x_{i,k} |^2 +  \sigma^2_i}
 \geq \gamma_i, \forall i \in \K_{\II},     \label{P3:new:SINRconstrn}    \\
&  \sum\limits_{i\in\mathcal{K_I}}|s_{j,i}|^2  + \sum \limits_{m\in\mathcal{K_E}}|t_{j,m}|^2 \geq {E}_{j}, \ \forall j\in \mathcal{K_{E}},     \label{P3:new:EHconstrn}    \\
& 0\leq \theta_n \leq 2\pi, \forall n\in \mathcal{N}, \label{P3:new:phase:constraints}
\end{align}
where $\rho>0$ denotes the penalty coefficient used for penalizing  the violation of equality constraints in (P2).
It is worth pointing out that although the equality constraints are relaxed in (P3), when $\rho \rightarrow 0$ $(1/\rho \rightarrow \infty)$, the solution obtained by solving  (P3) always satisfies all equality constraints in (P2). However, it is practically undesirable to initialize $\rho$ to be a very small value, since in this case the penalized objective function in (P3) will be dominated by the quadratic penalty terms and thus the original objective function (i.e., the transmit power in (P2)) will be diminished, rendering this approach ineffective. In contrast, initializing $\rho$ to be a sufficiently large value helps obtain a good starting point for the proposed algorithm, even though this point may be infeasible for (P2). By gradually decreasing the value of $\rho$, we can minimize the transmit power and also obtain a solution that satisfies all the equality constraints within a predefined accuracy.
%the equality constraints  in (P2) can always be satisfied by solving a problem solving problem (P2) yields the same solution as (P1).
% Note that it is not desirable to initialize $\rho$ to be a very small number, as it will get stuck at the initial point. Instead, by initialize  $\rho$ to be a very large number, we can obtain a good initialization of the optimization variables.  %stuck at the initial point
% that the optimization variables in different constraints have been fully decoupled, i.e.,

For any given $\rho>0$, (P3) is still a non-convex optimization problem due to the non-convex objective function as well as non-convex constraints in \eqref{P3:new:SINRconstrn} and \eqref{P3:new:EHconstrn}. However, it is observed  that  each optimization variable in (P3) is involved in at most one constraint, which thus motivates us to apply the block coordinate descent method to solve  (P3) efficiently by  properly partitioning the optimization variables into different blocks.
%with proper partition on its optimization variables, (P3) can be efficiently solved by applying the block coordinate descant method.
%From (P3), one can observe that the optimization variables in different constraints have been decoupled, i.e., each optimization is only involved in at most one constraint.
  Specifically, the entire optimization variables can be   partitioned into three blocks, i.e., 1) transmit precoders at the AP, i.e., $\{\bm{w}_i\}$ and $\{\bm{v}_j\}$, $i \in \K_{\II}$,  $j\in \K_{\E}$, 2) phase shifts at IRSs, i.e.,  $\bm{\theta}$, and 3) auxiliary variables, i.e.,   $\{x_{i,k}, s_{j,i},t_{j,m}\}$, $i, k\in \K_{\II}$,  $j, m\in \K_{\E}$. Then,    we can  minimize the penalized objective function in (P3) by alternately  optimizing each of the above three blocks  in one iteration with the other two blocks  fixed, and iterating the above until the convergence is reached. The details are provided  in the next subsection  and  the convergence is achieved in the inner layer until the fractional decrease of the objective function of (P3) is less than a sufficiently small threshold $\epsilon_1>0$ or a maximum number of iterations is reached.

  %The convergence is achieved   in the inner layer when the fractional decrease of the objective value of (P3) is below a sufficiently small threshold.

  %In addition, within each block, the optimization variables with respect to different EUs and IUs can be optimized in parallel, as will be shown later.

\subsection{Inner Layer: Block Coordinate Descent Algorithm for Solving  (P3)}
1) For any given phase shifts $\bm{\theta}$ and auxiliary variables $\{x_{i,k}, s_{j,i},t_{j,m}\}$, $i, k\in \K_{\II}$,  $j, m\in \K_{\E}$, the transmit precoders in (P3) can be optimized by solving the following problem
\begin{align}
&\text{(P3.1)}:~\min_{\{\bm{w}_i\}, \{\bm{v}_j\} }    \sum_{i\in \K_{\II}}\|\bm{w}_i\|^2  + \sum_{j\in \K_{\E}}\|\bm{v}_j\|^2 +\frac{1}{2\rho}\left(  \sum_{i\in  \K_{\II} }  \sum_{k\in  \K_{\II}}  |{\bm{h}}^H_i\bm{w}_k-x_{i,k}|^2  \right.  \nonumber  \\
                 &~~~~~~~~~~~~~~~~~~~~ \left.  +  \sum_{j\in  \K_{\E} }  \sum_{i\in  \K_{\II}}   |{\bm{g}}^H_j\bm{w}_i-s_{j,i}|^2 + \sum_{j\in  \K_{\E} }  \sum_{m\in  \K_{\E}}  |{\bm{g}}^H_j\bm{v}_m-t_{j,m}|^2 \right)  \label{eq:obj}
\end{align}
It is not difficult to observe that (P3.1) is a convex quadratic minimization problem without constraint, for which the optimal solution can be readily obtained by exploiting the first-order optimality condition of the objective  function \cite{Boyd}.  By setting the first-order derivatives of the objective function with respect to  $\bm{w}_i$ and $\bm{v}_j$ equal to zero, respectively, the optimal transmit precoders to (P3.1) are obtained in closed-form expressions given by
\begin{align}
\bm{w}^*_i &=\frac{1}{2\rho}\A_1^{-1}(    \sum_{k\in \K_{\II}}  \h_k x_{k,i}  +      \sum_{j\in \K_{\E}}  \g_j s_{j,i}  ),  \label{transmit:if:bf} \\
%\end{align}
%\begin{align}\label{transmit:energy:bf}
\bm{v}^*_j &= \frac{1}{2\rho}\A_2^{-1}(          \sum_{m\in \K_{\E}}  \g_m t_{m,j}  ), \label{transmit:energy:bf}
\end{align}
where
\begin{align}
\A_1 =  \I_M + \frac{      \sum_{k\in \K_{\II}}  \h_k \h_k^H  +      \sum_{m\in \K_{\E}}  \g_m \g_m^H }{2\rho},
\A_2 =  \I_M + \frac{        \sum_{m\in \K_{\E}}  \g_m \g_m^H }{2\rho}.
\end{align}
It is worth pointing out that all $\bm{w}^*_i$'s and $\bm{v}^*_i$'s for different IUs and EUs can be updated  in parallel by using \eqref{transmit:if:bf} and \eqref{transmit:energy:bf}.

2)  For any given  transmit precoders $\bm{w}_i$'s and $\bm{v}_i$'s and auxiliary variables  $\{x_{i,k}, s_{j,i},t_{j,m}\}$, $i, k\in \K_{\II}$,  $j, m\in \K_{\E}$, the phase shifts can be optimized by solving (P3) with constraints only  in \eqref{P3:new:phase:constraints}.  Let $\bm{u} = [u_1, \cdots, u_N]^H$ where $u_n = e^{j\theta_n}$, $\forall n$. Then, constraints in \eqref{P3:new:phase:constraints} are equivalent to  the unit-modulus constraints:  $|u_n|^2=1, \forall n$. By applying the change of variables $-\bm{h}^H_{d,i}\w_k+ x_{i,k} =\bar{C}_{i,k}$ and $\bm{h}^H_{r,i}\ttheta \bm{F}\w_k =\bm{u}^H \bar{\dd}_{i,k} $ where $\bar{\dd}_{i,k} = \text{diag}(\bm{h}^H_{r,i})\bm{F}\w_k \in \mathbb{C}^{N\times 1}$, we have
\begin{align}
 {\bm{h}}^H_i\bm{w}_k-x_{i,k} = \bm{h}^H_{r,i}\ttheta\bm{F}\w_k + \bm{h}^H_{d,i}\w_k -x_{i,k}  =  {\bm{u}}^H\bar{\dd}_{i,k} - \bar{C}_{i,k}.
\end{align}
Similarly, we have ${\bm{g}}^H_j\bm{w}_i-s_{j,i}  = {\bm{u}}^H\check{\dd}_{j,i} - \check{C}_{j,i}$ and  ${\bm{g}}^H_j\bm{v}_m-t_{j,m} = {\bm{u}}^H\hat{\dd}_{j,m} - \hat{C}_{j,m}$ for the other two quadratic penalty terms in \eqref{P3:obj}.  As a result,  the subproblem regarding to phase shifts optimization is given by (with  constant terms ignored)
\begin{align}                                
\text{(P3.2)}:~\min_{ \bm{u} }  ~~ &  \sum_{i\in  \K_{\II} }  \sum_{k\in  \K_{\II}}  |{\bm{u}}^H\bar{\dd}_{i,k} - \bar{C}_{i,k}|^2
  + \sum_{j\in  \K_{\E} }  \sum_{m\in  \K_{\E}}    |{\bm{u}}^H\hat{\dd}_{j,m} - \hat{C}_{j,m}|^2   \nonumber \\ & + \sum_{j\in  \K_{\E} }  \sum_{i\in  \K_{\II}}   |{\bm{u}}^H\check{\dd}_{j,i} - \check{C}_{j,i}|^2  \label{eq:obj} \\
~~~\mathrm{s.t.}~~ & |u_n|=1, \forall n\in\mathcal{N}. \label{P3.2:unit:modulus:constraint}
\end{align}
It is noted that (P3.2) is non-convex due to the non-convex unit-modulus constraints in \eqref{P3.2:unit:modulus:constraint}.  However, since the phase shifts of all elements are fully separable in constraints and only coupled in the objective function, we can apply the block coordinate descent method for optimizing them iteratively.  Specifically, for a given $n\in \mathcal{N}$ in (P3.2),  by fixing ${u}_{n'}$'s, $\forall n'\neq n, n'\in \mathcal{N}$, we observe that the objective function of (P3.2) is  linear with respect to ${u}_n$ and can be written as
%\begin{align}\label{Step2:objective}
%{\bm{u}}^H\R{\bm{u}} + 2\mathrm{Re}\{ {\bm{u}}^H {\bm b} \} +C,
%\end{align}
\begin{align}\label{Step2:objective}
&2\mathrm{Re}\left\{u_n\varphi_n\right\} + \sum_{n'\neq n}^{N}\sum_{m\neq n}^{N}\R(n',m)u_{n'}u_m^H+C,
\end{align}
where
\begin{align}
\varphi_n &= \sum_{n'\neq n}^{N}\R(n,n')u_{n'}^H-  \bb(n),  \label{obj:phase2} \\
\R & =    \sum_{i\in  \K_{\II} }  \sum_{k\in  \K_{\II}} \bar{\dd}_{i,k}\bar{\dd}_{i,k}^H
   + \sum_{j\in  \K_{\E} }  \sum_{m\in  \K_{\E}} \hat{\dd}_{j,m} \hat{\dd}_{j,m}^H +  \sum_{j\in  \K_{\E} }  \sum_{i\in  \K_{\II}} \check{\dd}_{j,i}\check{\dd}_{j,i}^H, \\
\bb & =   \sum_{i\in  \K_{\II} }  \sum_{k\in  \K_{\II}} \bar{\dd}_{i,k}\bar{C}_{i,k}^H
   + \sum_{j\in  \K_{\E} }  \sum_{m\in  \K_{\E}} \hat{\dd}_{j,m}\hat{C}_{j,m}^H +  \sum_{j\in  \K_{\E} }  \sum_{i\in  \K_{\II}} \check{\dd}_{j,i} \check{C}_{j,i}^H, \label{Step2:objective:3}\\
   C  &=  \R(n,n)- 2\mathrm{Re} \Big\{\sum_{n'\neq n}^{N}u_{n'}\bb(n')\Big\}   +   \sum_{i\in  \K_{\II} }  \sum_{k\in  \K_{\II}}   |\bar{C}_{i,k}|^2 \nonumber \\
       &~~~+   \sum_{j\in  \K_{\E} }  \sum_{m\in  \K_{\E}}    |\hat{C}_{j,m}|^2 + \sum_{j\in  \K_{\E} }  \sum_{i\in  \K_{\II}} |\check{C}_{j,i}|^2.
\end{align}
Based on \eqref{Step2:objective}, it is not difficult to show that the optimal solution of ${u}_n$ to (P3.2) is given by
\begin{align}\label{optimal:trj}
{{u}}^*_n=\left\{\begin{aligned}
                  &    1,&\text{if}~ \varphi_n=0, \\
                      & \frac{\varphi_n^H}{ |\varphi_n| }, &\text{otherwise}.
                    \end{aligned}
             \right.
\end{align}
%where $\varphi_n = [ \R \bm{u}  ]_n -\R (n,n)\bm{u}_n +\bb_n$.
Based on \eqref{optimal:trj}, we alternately optimize each of the $N$  phase shifts in an iterative manner by fixing the other $N-1$ phase shifts, until the convergence is achieved in this block.

3)   For any given transmit precoders $\bm{w}_i$'s and $\bm{v}_j$'s and phase shifts $\bm{\theta}$, the  auxiliary variables can be optimized by solving (P3) with  constraints in \eqref{P3:new:SINRconstrn} and \eqref{P3:new:EHconstrn}. Since the optimization variables with respect to different IUs and EUs are separable in both the objective function and constraints,  we can solve the resultant problem by solving  $K_I + K_E$ independent subproblems in parallel, each with only one single (either SINR or energy harvesting) constraint.  Specifically, for IU $i$, the corresponding subproblem with respect to $x_{i,k}$'s, $\forall k\in \K_{\II}$, is reduced to (by ignoring constant terms) % and  $K_I + K_E$ optimization variables
\begin{align}\label{QCQP1}
\text{(P3.3)}:~\min_{ \{x_{i,k}, \forall  k\} } &~   \sum_{k\in  \K_{\II}}  | {\bar x}_{i,k} -x_{i,k}|^2     \\
\mathrm{s.t.}
~&~~ \frac{|x_{i,i}|^2}{\sum\limits_{ k\neq i, k\in \K_{\II} }|x_{i,k} |^2 + \sigma^2_i}
 \geq \gamma_i, \label{P3.3:SINRconstrn}
\end{align}
where $ {\bar x}_{i,k} = {\bm{h}}^H_i\bm{w}_k$,  $i, k\in \K_{\II}$. Note that if ${\bar x}_{i,i}=0$,  the constraint in \eqref{P3.3:SINRconstrn} will always be met with equality at the optimal solution since otherwise $|{ x}_{i,i}|$ can be further reduced to decrease the objective value. As such, (P3.3) can be transformed to
\begin{align}\label{QCQP1:1}
\min_{ \{x_{i,k}, \forall  k\neq i, k\in \K_{\II}\} } &~   \sum_{k\neq i, k\in \K_{\II}}  | {\bar x}_{i,k} -x_{i,k}|^2    + \gamma_i \sum\limits_{ k\neq i, k\in \K_{\II} }|x_{i,k} |^2 + \gamma_i\sigma^2_i,
\end{align}
which is a  convex quadratic minimization problem without constraint as (P3.1) and thus can be similarly solved (note that all $K_I-1$ variables are also decoupled in the objective function). However, in the general case with ${\bar x}_{i,i}\neq 0$,  (P3.3) is a non-convex quadratically constrained quadratic program (QCQP) with one single constraint. Fortunately,  it has been shown in \cite{Boyd} that strong duality holds for this type of non-convex problems, provided that the Slater's constraint qualification is satisfied. As such, the duality gap between (P3.3) and its dual problem is zero, which means that the optimal solution can be obtained efficiently by applying the Lagrange duality method.  Denote by $\lambda_i\geq 0$ the dual variable associated with constraint \eqref{P3.3:SINRconstrn}. The  Lagrangian associated with (P3.3) can be expressed as (by ignoring constant terms)
\begin{align}\label{lagrange:function}
\mathcal{L}(\{x_{i,k}\}, \lambda_i) = (1-\lambda_i)|x_{i,i}|^2 + \sum_{k\neq i, k\in \K_{\II}}(1+\lambda_i\gamma_i) |x_{i,k}|^2 - 2\sum_{k\in  \K_{\II}}\mathrm{Re}\{ {\bar x}_{i,k}x_{i,k}^H \}.
\end{align}
Accordingly, the dual function is given by $f(\lambda_i)= \min_{ \{x_{i,k}, k\in \mathcal{ \K_{\II} \} }}  \mathcal{L}(\{x_{i,k}\}, \lambda_i)$, for which the following lemma holds.
\begin{lemma}\label{lem:multiplier}
To make $f( \lambda_i)$ bounded from the below, i.e., $f( \lambda_i) > - \infty$,  it follows that $\lambda_i < 1$ must hold.
\end{lemma}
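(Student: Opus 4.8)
The plan is to analyze the Lagrangian $\mathcal{L}(\{x_{i,k}\}, \lambda_i)$ in \eqref{lagrange:function} directly as a separable quadratic function of the complex variables $\{x_{i,k}\}_{k\in\K_{\II}}$ and determine when its infimum over these variables is finite. First I would observe that the Lagrangian decomposes as a sum of independent scalar quadratic functions, one for each index $k\in\K_{\II}$: the $k=i$ term contributes $(1-\lambda_i)|x_{i,i}|^2 - 2\mathrm{Re}\{\bar x_{i,i} x_{i,i}^H\}$, while each $k\neq i$ term contributes $(1+\lambda_i\gamma_i)|x_{i,k}|^2 - 2\mathrm{Re}\{\bar x_{i,k} x_{i,k}^H\}$. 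Minimizing a scalar function of the form $a|z|^2 - 2\mathrm{Re}\{bz^H\}$ over $z\in\mathbb{C}$ is bounded below if and only if $a>0$ (in which case the minimum is $-|b|^2/a$), and is unbounded below if $a\le 0$ (for $a<0$ let $|z|\to\infty$ with appropriate phase; for $a=0$ and $b\neq 0$ likewise, and here $b=\bar x_{i,i}\neq 0$ by the standing assumption that this case only arises when $\bar x_{i,i}\neq 0$).

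Next I would apply this observation termwise. Since $\lambda_i\ge 0$ and $\gamma_i>0$, the coefficients $1+\lambda_i\gamma_i$ multiplying $|x_{i,k}|^2$ for $k\neq i$ are always strictly positive, so those terms never cause unboundedness. Hence $f(\lambda_i)$ is bounded from below precisely when the $k=i$ term is bounded from below, i.e., precisely when the coefficient $1-\lambda_i$ of $|x_{i,i}|^2$ is strictly positive (recalling $\bar x_{i,i}\neq 0$ in this case, so the linear term is genuinely present). This gives $1-\lambda_i>0$, equivalently $\lambda_i<1$, as claimed. I would phrase the conclusion as a necessary condition: if $\lambda_i\ge 1$ then choosing $x_{i,i} = -r\,\bar x_{i,i}/|\bar x_{i,i}|$ and letting $r\to+\infty$ (with all other $x_{i,k}$ fixed, say at zero) drives $\mathcal{L}\to-\infty$, so $f(\lambda_i)=-\infty$; contrapositively, $f(\lambda_i)>-\infty$ forces $\lambda_i<1$.

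The main obstacle—really a subtlety rather than a genuine difficulty—is handling the boundary case $\lambda_i = 1$ cleanly, where the quadratic coefficient vanishes and one must invoke $\bar x_{i,i}\neq 0$ to get unboundedness from the surviving linear term $-2\mathrm{Re}\{\bar x_{i,i}x_{i,i}^H\}$; this is exactly why the lemma is stated and used only in the regime $\bar x_{i,i}\neq 0$, with the degenerate case $\bar x_{i,i}=0$ treated separately via \eqref{QCQP1:1}. I would make sure to state explicitly that we are in the case $\bar x_{i,i}\neq 0$ so this step is valid. Everything else is the routine scalar computation of completing the square, which I would not belabor; the only thing worth recording is that under $\lambda_i<1$ the dual function has the closed form $f(\lambda_i) = -|\bar x_{i,i}|^2/(1-\lambda_i) - \sum_{k\neq i}|\bar x_{i,k}|^2/(1+\lambda_i\gamma_i)$ (plus the earlier-dropped constant $\gamma_i\sigma_i^2$ if reinstated), which will presumably be needed in the sequel for the dual maximization over $\lambda_i$.
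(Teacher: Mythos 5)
Your argument is the same as the paper's, which also proceeds by contradiction: assuming $\bar x_{i,i}\neq 0$ and $\lambda_i\ge 1$, it drives $\mathcal{L}\to-\infty$ by letting $|x_{i,i}|\to\infty$; your version merely spells out the termwise decomposition and the boundary case $\lambda_i=1$ more explicitly. One small but real slip: your explicit witness $x_{i,i}=-r\,\bar x_{i,i}/|\bar x_{i,i}|$ has the wrong sign, since it gives $-2\mathrm{Re}\{\bar x_{i,i}x_{i,i}^H\}=+2r|\bar x_{i,i}|$, so in precisely the boundary case $\lambda_i=1$ that you singled out as the subtle one (where the quadratic term vanishes) your choice sends $\mathcal{L}\to+\infty$ rather than $-\infty$; taking $x_{i,i}=+r\,\bar x_{i,i}/|\bar x_{i,i}|$ fixes this and the rest of the proof stands.
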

\begin{proof}
This is shown by contradiction. Based on ${\bar x}_{i,i}\neq 0$, if $\lambda_i \geq  1$, we have $f( \lambda_i) \rightarrow - \infty$ by setting $|x_{i,i}| \rightarrow \infty$. Thus, this lemma is proved.
\end{proof}
Based on  Lemma \ref{lem:multiplier}, by exploiting the first-order optimality condition, the optimal solution to minimize the Lagrangian in \eqref{lagrange:function} for fixed $\lambda_i$ is given by
\begin{align}
x^{\star}_{i,i} &=  \frac{  {\bar x}_{i,i}   }{1-\lambda_i}, \label{QCQP:solution1}  \\
x^{\star}_{i,k} &=\frac{ {\bar x}_{i,k}  }{1+\lambda_i\gamma_i }, k\neq i, k\in \K_{\II}.  \label{QCQP:solution2}
\end{align}
If the SINR constraint in \eqref{P3.3:SINRconstrn} is not met with equality at the optimal solution, i.e., $\lambda_i =0$, then the optimal solution to (P3.3) is given by $ {x}^*_{i,k}= {\bar x}_{i,k}$, $i, k\in \K_{\II}$. Otherwise, by substituting \eqref{QCQP:solution1}-\eqref{QCQP:solution2} into \eqref{P3.3:SINRconstrn}, this equality constraint can be written as
\begin{align} \label{QCQP:solution4}
\mathcal{G}(\lambda_i) \triangleq  \frac{  |{\bar x}_{i,i}|^2   }{ (1 - \lambda_i)^2 } -  \sum_{ k\neq i, k\in \K_{\II} }\frac{ \gamma_i |{\bar x}_{i,k}|^2  }{ (1+\lambda_i\gamma_i)^2 }    - \gamma_i\sigma^2_i =0.
\end{align}
It is not difficult to show that $\mathcal{G}(\lambda_i)$ is a monotonically increasing function of $\lambda_i$ for $0 \leq  \lambda_i <  1$.  As such, the optimal dual variable and primal variables can be efficiently obtained by using the simple bisection search.

On the other hand, for  EU $j$, the corresponding subproblem is given by
\begin{align}\label{QCQP2}
\text{(P3.4)}:~\min_{ \{t_{j,m},s_{j,i}, \forall m,i \} } &~   \sum_{m\in  \K_{\E}}  |   {\bar t}_{j,m}   -  t_{j,m} |^2   +  \sum_{j\in  \K_{\E}}  | {\bar s}_{j,i}   -  s_{j,i} |^2  \\
\mathrm{s.t.}
&~ \sum\limits_{i\in\mathcal{K_I}}|s_{j,i}|^2  + \sum \limits_{m\in\mathcal{K_E}}|t_{j,m}|^2\geq {E}_{j}, \label{QCQP2:constraint}
\end{align}
where $ {\bar t}_{j,m} = {\bm{g}}^H_j\bm{v}_m$ and $ {\bar s}_{j,i} = {\bm{g}}^H_j\bm{w}_i $, $j, m\in \K_{\E}, i \in  \K_{\II}$.  Since (P3.4) is also a QCQP with one single constraint as (P3.3), it can be similarly solved by applying  the bisection search as proposed above and the details are thus omitted for brevity.

%The above three blocks are alternately optimized and the convergence is reached  in the inner layer until the fractional decrease of the objective function of (P3) is less than a sufficiently small threshold or the maximum number of iterations is reached.
\subsection{Outer Layer: Update Penalty Coefficient}
Recall that the equality constraints in (P2) need to be satisfied in the converged solution of the proposed algorithm.
%This is guaranteed by keeping decreasing the value of the penalty coefficient $\rho$ as follows,
To this end, we gradually decrease  the value of the penalty coefficient $\rho$  as follows
\begin{align}\label{penalty:coefficient}
\rho: = c \rho,~~ 0 < c<1,
\end{align}
where $c$ is a constant scaling factor in the outer layer. Generally, a larger value of $c$ can achieve better performance but  at the cost of more iterations in the outer layer.
%where $$ is the scaling number. %% tradeoff bwteen the setting of c.

 \begin{algorithm}[t]
\caption{Proposed penalty-based algorithm.}\label{Alg:PDD}
\begin{algorithmic}[1]

\STATE Initialize $\bm{\theta}$,  $\{x_{i,k}, s_{j,i},t_{j,m}\}$, $i, k\in \K_{\II}$,  $j, m\in \K_{\E}$, and $\rho$.
\REPEAT
\REPEAT
\STATE Update transmit precoders by solving (P3.1).
\STATE Update phase shifts by solving (P3.2).
\STATE Update  auxiliary variables by solving (P3.3) and (P3.4), respectively.
\UNTIL{  The fractional decrease of the objective value of (P3) is below a threshold $\epsilon_1>0$ or the maximum number of inner iterations is reached.}
\STATE Update the penalty coefficient $\rho$ by using \eqref{penalty:coefficient}.
\UNTIL{ The constraint violation $\xi$  is below a threshold $\epsilon_2>0$.}

\end{algorithmic}
\end{algorithm}
\subsection{Convergence Analysis and Computational Complexity}
For any solution obtained  to (P3), to evaluate whether it violates the  equality constraints in (P2) or not, we adopt an indicator $\xi$ defined as
\begin{align}\label{constraint:violation}
\xi = \max \{   |{\bm{h}}^H_i\bm{w}_k-x_{i,k}|^2, |{\bm{g}}^H_j\bm{w}_i-s_{j,i}|^2,
  &|{\bm{g}}^H_j\bm{v}_m-t_{j,m}|^2,   i, k\in \K_{\II}, j, m\in  \K_{\E} \}.
\end{align}
The proposed algorithm is terminated when $\xi\leq \epsilon_2$ where $\epsilon_2$ is a predefined accuracy for all equality constraints. With the decrease of the penalty coefficient, the penalty term becomes larger and will eventually guarantee the equality constraints \cite{bertsekas1999nonlinear,Boyd,shi2016joint,zhao2019decoding}. In addition, for any $\rho$,   the objective value of (P3) achieved by applying the block coordinate descent method  is non-increasing over iterations in the inner layer and  the optimal objective value of (P3) is bounded from below. Thus, based on the result in \cite{shi2016joint},  the proposed algorithm is guaranteed to converge to a stationary point of (P1). The details of this algorithm are summarized in Algorithm 1.

% the proposed algorithm will result in a stationary point when the convergence is achieved.
%the proposed algorithm in the inner layer converges to a stationary point due to: 1)

Algorithm 1 is computationally efficient as the  optimization variables in lines 4 and 5 are updated by using closed-form expressions, and those in line 6 are obtained by using the simple bisection search. To be specific, it can be shown that the complexity of solving (P3.1) is $\mathcal{O}( (N^2+MN+M^2)(K_I+K_E) +M^3 )$, that of solving (P3.2) is $\mathcal{O}( N^2(K_I^2+K_E^2+K_IK_E) +I_0N)$ where $I_0$ is the number of iterations required for convergence, and that of solving (P3.3) is $\mathcal{O}( (K_I^2+K_E^2+K_IK_E) \log_2(1/\epsilon_3))$ where $\epsilon_3$ is the accuracy for the bisection search.   Thus, the overall complexity of Algorithm 1 can be written as $\mathcal{O}(I_{inn}I_{out}(  M^3 +  (MN+M^2)(K_I+K_E) + N^2(K_I^2+K_E^2+K_IK_E) +I_0N +(K_I^2+K_E^2+K_IK_E) \log_2(1/\epsilon_3)  ) )$ where $I_{inn}$ and $I_{out}$ denote respectively the outer and inner iteration numbers required for convergence.

\begin{remark}
\emph{In practice, it is generally desirable to implement IRS with discrete phase shifters \cite{JR:wu2019IRSmaga,JR:wu2019discreteIRS}, which means that the phase shift at each element of IRS only takes a finite number of discrete values. Specifically, by assuming that the discrete phase-shift values are obtained by uniformly quantizing the interval $[0,2\pi)$,  we can replace the constraints in \eqref{P1:phase:constraints}  by  % the following ones,
\begin{align}\label{P1:discrete:phase}
\theta_n \in \mathcal{F}\triangleq \{0,\Delta\theta, \cdots, (2^b-1)\Delta\theta \}, \forall n \in \mathcal{N},
\end{align}
where  $\Delta\theta= 2\pi/2^b$ and $b$ denotes the number of bits used to indicate the number of phase-shift levels at each element.
Unfortunately, this  renders the modified problem of (P1) to be a mixed-integer non-linear program (MINLP) that is  more challenging to solve in general. However,  Algorithm 1 is still applicable to solving the new problem  with only some slight modification. Specifically, we only need to replace (P3.2) by the following optimization problem
\begin{align}
\text{(P3.5)}:~~\min_{ \bm{u} }  ~~~ &  \sum_{i\in  \K_{\II} }  \sum_{k\in  \K_{\II}}  |{\bm{u}}^H\bar{\dd}_{i,k} - \bar{C}_{i,k}|^2   + \sum_{j\in  \K_{\E} }  \sum_{m\in  \K_{\E}}    |{\bm{u}}^H\hat{\dd}_{j,m} - \hat{C}_{j,m}|^2    \nonumber\\ & + \sum_{j\in  \K_{\E} }  \sum_{i\in  \K_{\II}}   |{\bm{u}}^H\check{\dd}_{j,i} - \check{C}_{j,i}|^2  \label{eq:obj} \\
~~~\mathrm{s.t.}~~~ &u_n \in \mathcal{F}, \forall n \in \mathcal{N}. \label{P3.5:phase:constraints}
\end{align}
As (P3.5) has a similar structure  as (P3.2) except that the constraint \eqref{P3.2:unit:modulus:constraint} is now replaced by \eqref{P3.5:phase:constraints},  it can be  solved by using the proposed method in Section III-B similarly. As a result, the optimal solution given in (36) is modified as $u^{**}_n =  \arg \min_{u_n \in \mathcal{F}  } | u_n -u^*_n|, \forall n$.}
\end{remark}

\section{Alternative Low-Complexity  Algorithm}\label{IV}
Although Algorithm 1 proposed in the preceding  section yields high-quality converged solution for (P1), the phase shifts of all IRSs' elements need to be successively optimized in each of the inner layer iterations with given $\rho$, which accounts for the main complexity of Algorithm 1.  To overcome this issue, we propose an alternative  low-complexity algorithm in this section by separating the design of the phase shifts and transmit precoders. The key idea is to exploit the short-range/local coverage of  IRSs, i.e.,  the reflected signals received at each user are mainly from its nearest  IRS in its proximity (if any), since IRSs are usually deployed to be sufficiently far apart from each other in practice to avoid complicated inter-IRS interference management. Motivated by this, for each IRS, we can first optimize its phase shifts regardless of those of the other IRSs by only considering the users associated with it,  and then optimize the transmit precoders of all users to guarantee their  QoS requirements, elaborated as follows.

First, we  associate each of the users  with an  IRS that is closest to it.  Denote the set of users (including IUs and/or EUs) associated with the $\ell$th IRS  by $\mathcal{U}_\ell$.
Use $\bm{q}^H_{d,k}$ to denote either  $\bm{h}^H_{d,k}$ or $\bm{g}^H_{d,k}$ and let  $\bm{\Phi}_k(\ell) =  \text{diag}(\bm{q}^H_{r,k}(\ell))\bm{F}(\ell)  \in \mathbb{C}^{N_{\ell} \times M}$, $ k \in \mathcal{U}_\ell$ where $\bm{q}^H_{r,k}(\ell)$ can be either $\bm{h}^H_{r,k}(\ell)$ or $\bm{g}^H_{r,k}(\ell)$.
Then the effective channel sum-power gain of all users associated with the $\ell$th IRS can be expressed as
\begin{align}\label{eq:obj}
\sum_{k\in \mathcal{U}_\ell}  \| \bm{q}^H_{r,k}(\ell)\ttheta(\ell)\bm{F}(\ell)+\bm{q}^H_{d,k}  \|^2=\sum_{k\in \mathcal{U}_\ell}   \|\uuu^H \bm{\Phi}_k(\ell) + \bm{q}^H_{d,k}  \|^2,
  \end{align}
  where $\bm{u}  = [u_1, \cdots, u_{N_\ell}]^H$ with $u_n = e^{j\theta_n}$,  $n=1,\cdots, N_{\ell}$.
   Based on \eqref{eq:obj}, the phase shifts of the $\ell$th IRS can be  optimized  by solving the following problem
\begin{align}
\text{(P4)}: ~~\max_{\uuu} ~~~&\sum_{k\in \mathcal{U}_\ell}  \|\uuu^H \bm{\Phi}_k(\ell) + \bm{q}^H_{d,k}  \|^2 \\
\mathrm{s.t.}~~~~&  |u_n|=1, n=1,\cdots, N_{\ell}. \label{eq:modulus2}
\end{align}
Since (P4) has a similar form as (P3.2), it can be efficiently solved by using the proposed method in Section III-B. More importantly, note that  the phase shifts of different IRSs can be optimized in parallel by solving each corresponding problem of (P4). With the obtained phase shifts, the effective channels of all users can be constructed (see,  e.g., \eqref{Sec:II:signal3} for IUs).
%With the obtained phase shifts, the user effective channels are given by  $ \bm{h}^H_{i} =\bm{h}^H_{r,i}\ttheta(\ell)  \bm{F}(\ell)+\bm{h}^H_{d,i}$ and $\g^H_j = \bm{g}^H_{r,j}\ttheta(\ell)  \bm{F}(\ell) +  \bm{g}^H_{d,j}$.
Then, the transmit precoders can be optimized at the AP by solving the following problem
\begin{align}
\text{(P5)}: \min_{\{\bm{w}_i\}, \{\bm{v}_j\} } & \sum_{i\in \K_{\II}}\|\bm{w}_i\|^2 + \sum_{j\in \K_{\E}}\|\bm{v}_j\|^2  \label{P5:eq:obj}\\
\mathrm{s.t.}%&\sum_{i\in \K_{\II}}\|\bm{w}_i\|^2 + \sum_{i\in \K_{\E}}\|\bm{w}_j\|^2\leq P,  \\
~~~  &\frac{|\bm{h}^H_{i} \bm{w}_i |^2}{\sum\limits_{ k\neq i, k\in \K_{\II} }| \bm{h}^H_{i}  \bm{w}_k |^2   + \sigma^2_i} \geq \gamma_i, \forall i \in \K_{\II}, \label{P5:SINRconstrn}\\
&\sum\limits_{i\in\mathcal{K_I}}| \bm{g}^H_{j}{\mv w}_i|^2  + \sum \limits_{m\in\mathcal{K_E}}| \bm{g}^H_{j}{\mv v}_m|^2 \geq {E}_{j}, \ \forall j\in \mathcal{K_{E}}. \label{P5:EHconstrn}
\end{align}
Since (P5) is a special case of (P1) without phase shifts,  it can be solved by Algorithm 1 in Section III.  Based on (P4) and (P5), the overall complexity of the above algorithm (refereed to as Algorithm 2) can be shown to be  $\mathcal{O}( I_0N+ I_{inn}I_{out}(M^3 +  M^2(K_I+K_E)  +(K_I^2+K_E^2+K_IK_E) \log_2(1/\epsilon_3) )  )$, where the complexity of optimizing the phase shifts (i.e., regarding $N$) is significantly reduced  as compared to that of Algorithm 1, due to this separate design.
%Compared to Algorithm 1,  Algorithm 2 not only admits the phase shifts optimization of all IRSs in parallel, but also has a much lower computational complexity via significantly reducing that for phase shifts optimization.

\section{Simulation Results}\label{simulation:sec}

 \begin{figure}[ht]
\centering
\includegraphics[width=0.99\textwidth]{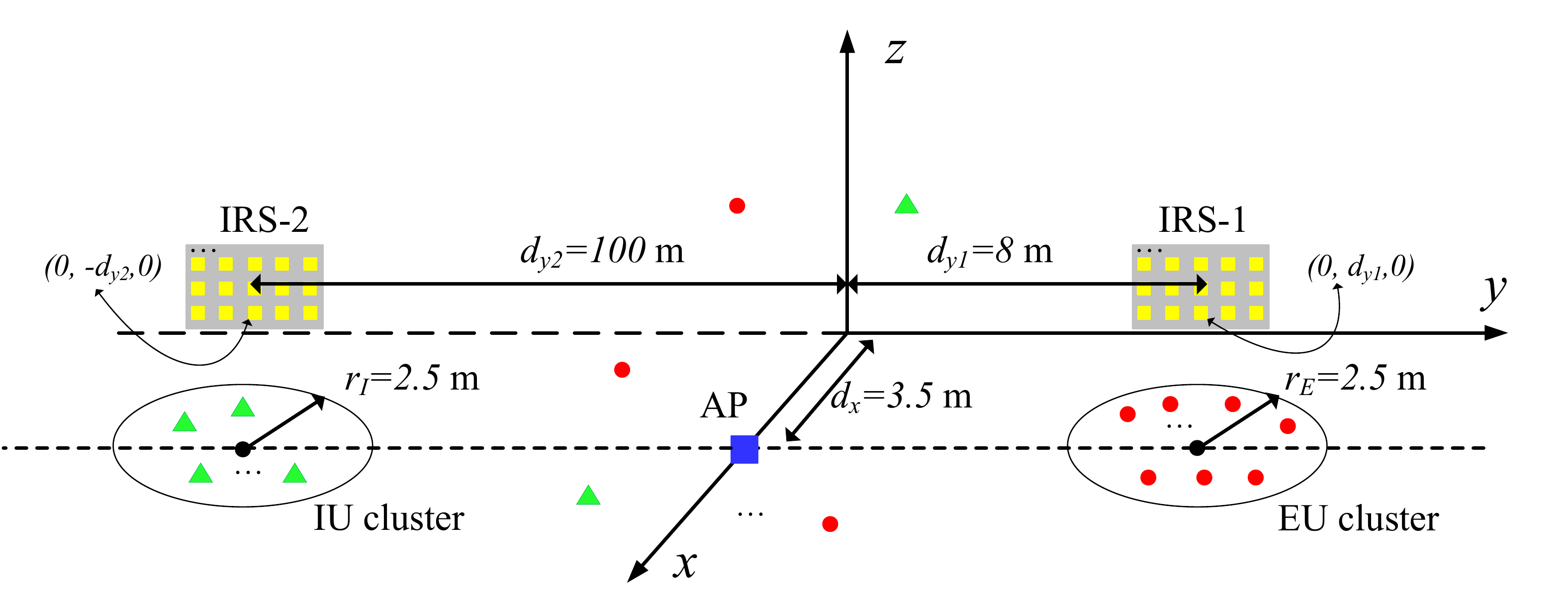}\vspace{-4mm}
\caption{Simulation setup. } \label{simulation:setup:multiIRS:SWIPT}\vspace{-8mm}
\end{figure}

In this section, numerical examples are provided to evaluate  the performance  of the proposed algorithms.    We consider a system that operates on a carrier frequency of 750 MHz with the system bandwidth of $1$ MHz and the effective noise power density of $-150$ dBm/Hz. A three-dimensional (3D) coordinate setup is considered as shown in Fig. \ref{simulation:setup:multiIRS:SWIPT}, where the AP is located  in $x$-axis with the coordinate denoted by $(d_x ,0,0)$. Besides,  two IRSs, namely IRS-1 and IRS-2,  are deployed in $y$-$z$ plane to establish hot spots for two clusters of EUs and IUs, respectively, where the cluster circles are centered at  $(d_x ,d_{y1},0)$ and   $(d_x ,-d_{y2},0)$ with radius $r_{E}$ and $r_{I}$, respectively. %
The reference elements of IRS-1 and IRS-2 are respectively located at $(0,d_{y1},0)$ and $(0,-d_{y2} ,0)$, both with a spacing of  half-wavelength, i.e., $\lambda/2 = 0.2$ m,  among adjacent elements. We assume that both IRSs have the same number of reflecting elements, i.e.,  $N=2N_0$, and for each IRS, we set $N_{0}=N_{y}N_{z}$ where $N_{y}$ and $N_{z}$ denote the numbers of reflecting elements along  $y$-axis and  $z$-axis, respectively.   For the purpose of exposition, we fix $N_y=5$ and increase $N_{z}$ linearly with $N_{0}$.  The distance-dependent path loss model is given by
\begin{align}\label{pathloss}
L(d) = C_0\left( \frac{d}{D_0} \right)^{-\alpha},
\end{align}
%\begin{align}\label{pathloss}
%L_{A-I-U}(d_1,d_2) = C_0\left( \frac{d_1}{D_0} \right)^{-\alpha}\times C_0\left( \frac{d_2}{D_0} \right)^{-\alpha},
%\end{align}
where $C_0= (\lambda/4\pi)^2$ is the path loss at the reference distance $D_0=1$ meter (m), $d$ denotes the individual link distance, and $\alpha$ denotes the path loss exponent.

We adopt the plane-wave model for both the AP-IRS and AP-user links, whereas the spherical-wave model for IRS-user link due to the generally limited signal coverage of  IRS, which means that the distance between each reflecting element and one user is calculated separately  based on their 3D coordinates.  Each antenna at the AP is assumed to have an isotropic radiation pattern with 0 dBi antenna gain, while each reflecting element of  IRSs is assumed to have 3 dBi gain for fair comparison, since each IRS reflects signals only in its front half-space. The path loss exponents of the AP-IRS and IRS-user links are set to be 2.2 whereas that of the AP-user link is set to be 3.8 as IRSs are usually deployed for users with weak AP-user channels and their locations can be properly selected to avoid severe blockage with the AP.
To account for  small-scale fading, we assume the Rayleigh fading channel model for the AP-user and IRS-user links while that for the AP-IRS link  will be  specified later depending on the scenarios.
  Without loss of generality, we assume that all IUs/EUs have the same SINR/RF receive power target, i.e., $\gamma_i=\gamma_0$, $\forall i\in \mathcal{K_I}$ and ${E}_{j}= {E}_0$, $\forall j\in \mathcal{K_E}$. For Algorithm 1, the phase shifts of all elements are initialized by $\theta_n=0, \forall n$,  $\{x_{i,k}, s_{j,i},t_{j,m}\}$, $i, k\in \K_{\II}$,  $j, m\in \K_{\E}$ are initialized randomly following $\mathcal{CN}(0,1)$, and the penalty coefficient is  initialized by $\rho=1000$. Other system parameters are set as follows unless specified otherwise  later: $c=0.9$, $\epsilon_1=10^{-4}$, $\epsilon_2=\epsilon_3=10^{-7}$, $d_{y1}=8$ m, $d_{y2}=100$ m, $r_I=r_E=2.5$ m, and $d_x=3.5$ m.  %,  and  $d_0= 51$ m (if not specified otherwise).

\vspace{-0.4cm}
\begin{figure}[ht]
\centering
\subfigure[Constraint violation, $\xi$]{\includegraphics[width=0.49\textwidth]{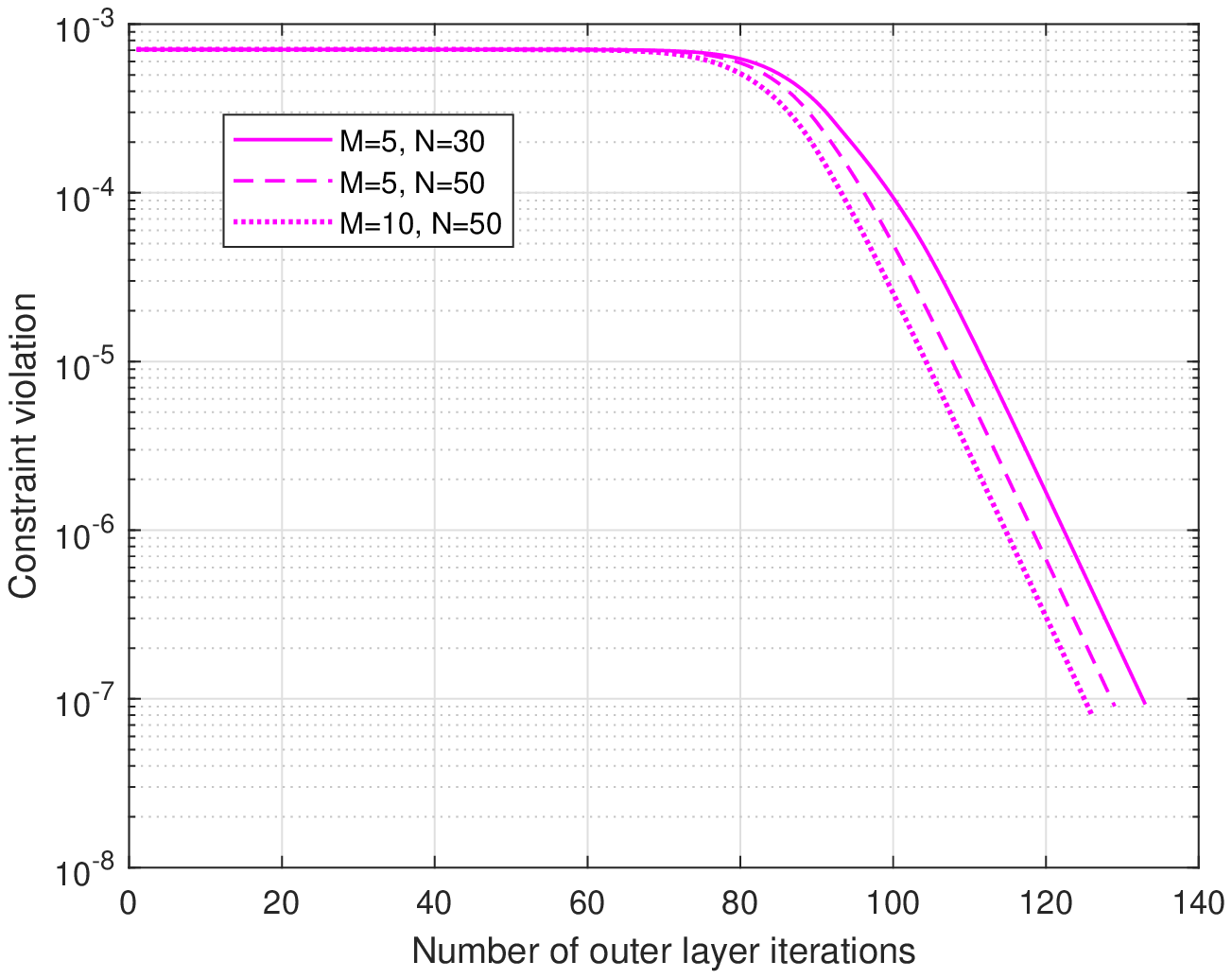} \label{simulation:constraint:violation} } %SNR_N3_d12% SNR_N_d50d15%0.495
\subfigure[Objective value]{\includegraphics[width=0.49\textwidth]{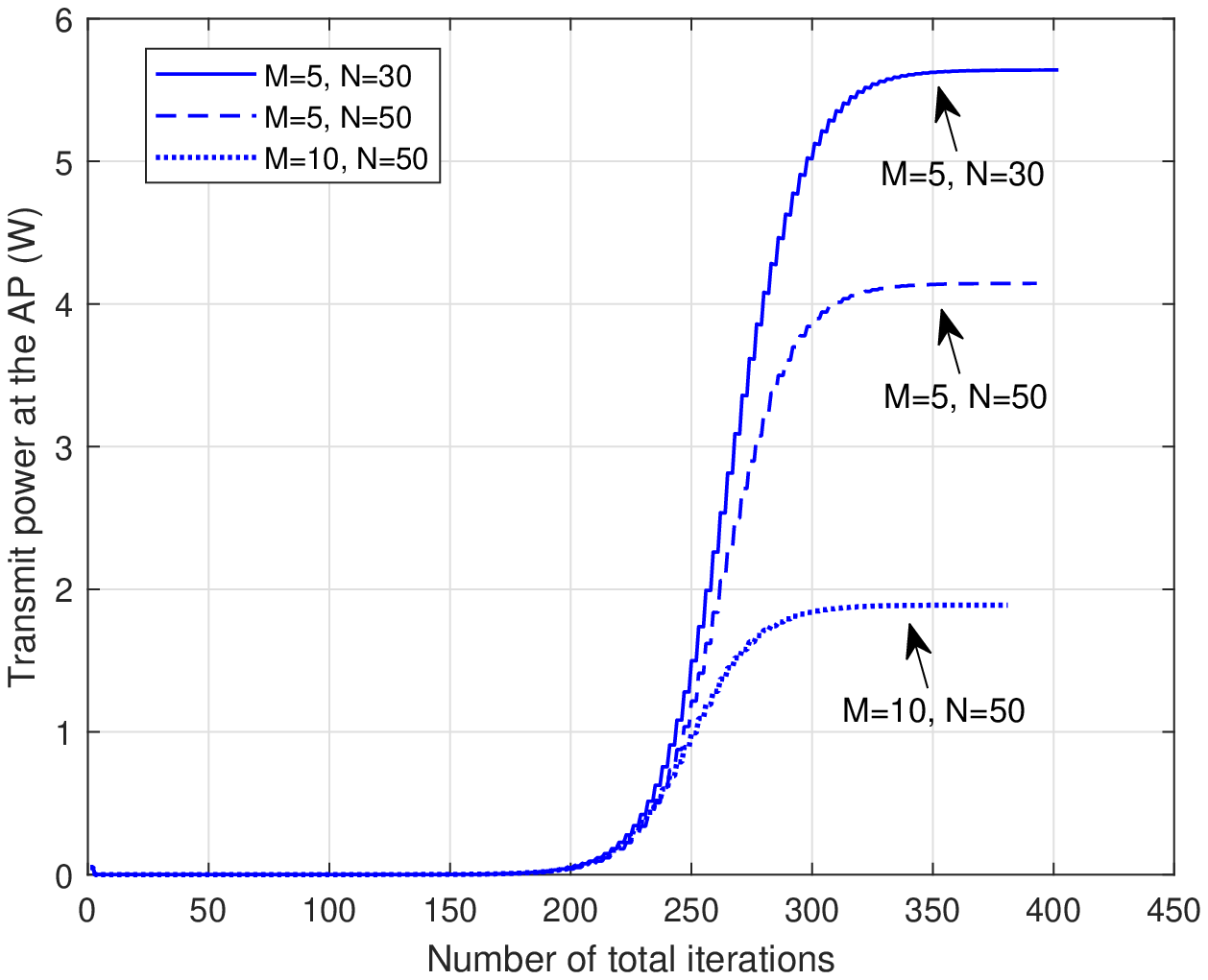}\label{simulation:obj:value}}% 41m
\caption{Convergence behaviour of Algorithm 1.   } \label{simulation:convergence}\vspace{-1cm}
\end{figure}
\subsection{Convergence of Algorithm 1}

Consider that $K_E=4$ and $K_I= 4$ and they are randomly located within the EU and IU clusters in front of IRS-1 and IRS-2, respectively, with ${E}_{0} =5$ microwatt ($\mu$W) and $\gamma_0=20$ dB.
In Fig. \ref{simulation:convergence}, we show the constraint violation for the newly introduced equality constraints, i.e., \eqref{constraint:violation}, and also the convergence of the proposed Algorithm 1 in Section III.
%Based on the phase shifts obtained by the proposed algorithm, we solve (P1) by using SDR to obtain the optimal  transmit precoders at the AP.
From Fig. \ref{simulation:constraint:violation}, it is observed that as the number of outer layer iterations increases, the equality constraints in (P2)  can  be eventually  satisfied within the  predefined accuracy (i.e., $10^{-7}$), which suggests that a solution satisfying all the user QoS constraints in  (P1) is obtained by Algorithm 1. Furthermore,  from Fig. \ref{simulation:obj:value}, one can observe that under different setups, the transmit power required at the AP  converges quickly. Note that the transmit power required at the AP  increases as the number of outer layer iterations increases (or equivalently $\rho$ decreases). This is expected since $\rho$ is initialized by a sufficiently large value, which results in sufficiently small transmit power in (P3) and a smaller $\rho$ corresponds to a larger penalty for the violation of equality constraints, which generally requires larger transmit power to minimize the penalty term.

%  M = 10,   N = 60,   K_I = 2,   K_E = 12

%
%  \begin{figure}[ht]
%\centering
%\includegraphics[width=0.8\textwidth]{}
%\caption{Convergence behaviour of the proposed centralized algorithm with $M=8, K=8, N=48$, and $\gamma=15$ dB. } \label{convergence} \vspace{-3mm}
%\end{figure}

%Algorithm 1 is not a monotonically convergent algorithm, and as expected, there exist fluctuations of the objective value in the initial few iterations. This is mainly because the initial penalty is relatively small, thus the solutions obtained do not satisfy. (i.e., the value of the constraint violation is large), and this causes the oscillatory behavior. As the penalty increases with the iteration number, it forces the value of constraint violation to zero, and as a result, Algorithm 1 converges. It can also achieve near-optimal performance in the continuous phase-shifts case.
\subsection{IRS-aided WPT}

We first study a special case of SWIPT, i.e., the IRS-aided WPT,  where there exist only EUs randomly located in the EU cluster.
%It corresponds to the following problem
%\begin{align}
%\text{(P1)}: \min_{  \{\bm{v}_j\}, \bm{\theta}} &  \sum_{j\in \K_{\E}}\|\bm{v}_j\|^2    \label{eq:obj}\\
%\mathrm{s.t.}%&\sum_{i\in \K_{\II}}\|\bm{w}_i\|^2 + \sum_{i\in \K_{\E}}\|\bm{w}_j\|^2\leq P,  \\
%& \sum \limits_{m\in\mathcal{K_E}}| (  \bm{g}^H_{r,j}\ttheta \bm{F} +  \bm{g}^H_{d,j}  ){\mv v}_m|^2 \geq {E}_{j}, \ \forall j\in \mathcal{K_{E}}, \label{Sec:V:P1:EHconstrn}\\
%& 0\leq \theta_n \leq 2\pi, \forall n\in \mathcal{N}, \label{phase:constraints}
%\end{align}
   For comparison, we  consider three  benchmark schemes: 1) Alternating optimization where the transmit precoders and phase shifts are optimized alternately as in \cite{cui2019secure,guan2019intelligent,chen2019intelligent,dongfang2019,fu2019intelligent,huangachievable,yang2019intelligent,huang2018energy,wu2019weighted,pan2019intelligent}; 2)  IRS with fixed phase shifts, i.e., $\theta_n=0$, $\forall n \in \mathcal{N}$; and 3) Without IRS. The transmit precoders of 2) and 3) are  optimally obtained by applying semidefinite relaxation (SDR) as in \cite{xu2014multiuser}. To draw useful insight on the IRS deployment for WPT, we consider two cases of $\F$, i.e., $\F$ with all elements being 1 or $\F$ with all elements following  $\mathcal{CN}(0,1)$  independently, which correspond to deploying the IRS in an LoS-dominated and a rich-scattering environment with Rayleigh fading in practice, respectively.

%2) single energy beam where the transmit beamforming direction aligns with the principle eigenvector of the sum user effective channels, i.e., $\Ss =\sum_{j\in  \K_{\E}}\g_j\g^H_j$, 3) without IRS where the transmit precoders are obtained optimally by applying SDP.
 \begin{figure}[ht]
\centering
\includegraphics[width=0.55\textwidth]{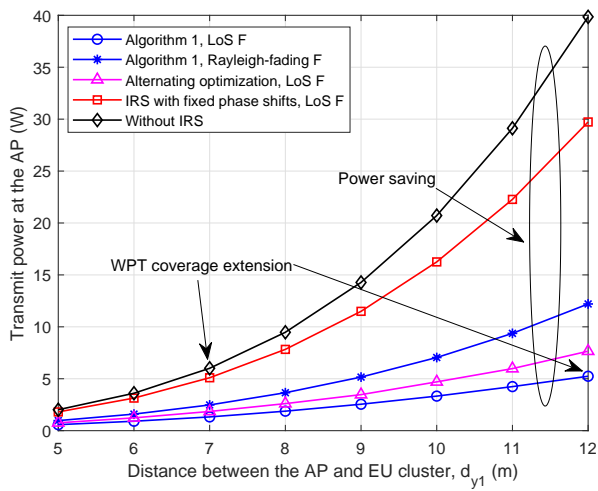}\vspace{-0.3cm}%paper_SU_dis_08102%SNR_distance3
\caption{Transmit power at the AP versus the AP-EU distance.  } \label{simulation:distance}\vspace{-0.5cm}  %    Compare_power_vs_Dis_AP_EU_N50_vs_N20_chl100
\end{figure}

\subsubsection{AP Transmit Power versus AP-EU Distance}
To show the effectiveness of IRS for compensating the distance-dependent  path loss, we consider the circle center of the EU cluster moves along a line that is in parallel to $y$-axis shown in Fig. \ref{simulation:setup:multiIRS:SWIPT}, where  IRS-1 also moves accordingly to keep the relative distance with the EU cluster center unchanged. By varying the value of $d_{y1}$, we examine in Fig. \ref{simulation:distance} the  transmit power required at the AP  with $M=8$, $K_E=10$, $N_0=40$, and ${E}_{0} =5$ $\mu$W. From Fig. \ref{simulation:distance}, it is observed that without IRS, the transmit power required at the AP increases drastically as  EUs move far away from the AP. This thus fundamentally limits the operating range of  WPT due to practical restrictions (e.g., radio regulation) on the peak  transmit power. In contrast, by deploying the IRS in the proximity of EUs, the increase of the transmit power over distance is significantly alleviated.   In other words, for the same transmit power at the AP, the WPT operating range can be extended without compromising the RF receive power target at EUs.  For example, for the same transmit power about 6 W, EUs with the distance beyond 7 m from the AP cannot meet the energy harvesting constraint in the case without IRS, whereas by deploying the IRS in LoS with the AP,  it is even feasible for EUs with the distance of 12 m from the AP. This is expected since the large aperture and beamforming gain  of the IRS help boost  the signal power  significantly in its vicinity. Furthermore, one can observe that despite having the same average path loss between the AP and the IRS for both  LoS $\F$ and Rayleigh-fading   $\F$, the transmit power required in the former case is much lower than that in the latter case. This is expected since the rank-deficient LoS channel in the AP-IRS link introduces stronger correlation among the effective channels of EUs than its Rayleigh-fading  counterpart, thus rendering WPT more efficient.    Finally, it is observed that the proposed Algorithm 1 outperforms the alternating optimization scheme as well as the scheme with fixed phase shifts at the IRS.

\subsubsection{AP Transmit Power versus Number of EUs}

 \begin{figure}[ht]
\centering
\includegraphics[width=0.55\textwidth]{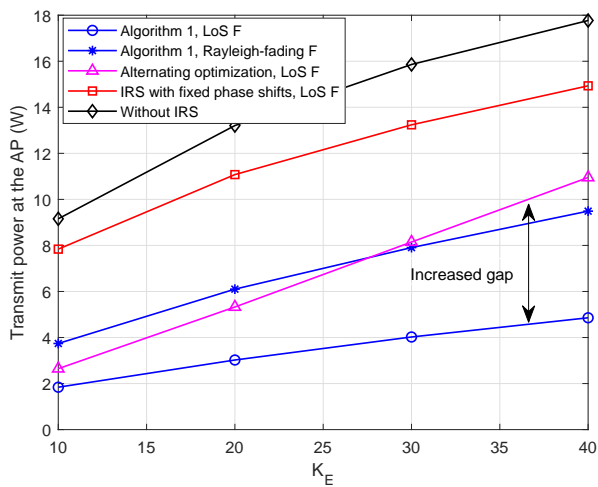}\vspace{-0.3cm}%paper_SU_dis_08102%SNR_distance3
\caption{Transmit power at the AP versus the number of EUs. } \label{simulation:WPT:EUs}\vspace{-0.5cm}
\end{figure}

To further investigate the IRS's effect on WPT performance as well as  the advantage of the proposed algorithm over the alternating optimization scheme, we plot in Fig. \ref{simulation:WPT:EUs} the transmit power required at the AP versus the number of EUs under the same setup as in Fig. \ref{simulation:distance} with $d_{y1}=8$ m. It is observed that as $K_E$ increases, the performance gap between the proposed algorithm and other benchmark schemes becomes larger. In particular, the alternating optimization scheme suffers considerable performance loss when the number of EUs (i.e., QoS constraints) becomes large. Moreover, besides significant transmit power saving, the number of energy beams required at the AP  is also reduced by deploying the IRS. To show this, we count in Table  \ref{table:energybeam}  the number of energy beams (denoted by $d_E$) required for the cases without IRS versus  with IRS (including both LoS $\F$ and Rayleigh-fading  $\F$), respectively,  for a total number of 500 channel realizations. From Table \ref{table:energybeam}, one can observe that without IRS, sending a single energy beam from the AP is suboptimal in general, especially for large $K_E$. However, by deploying the IRS in the case of  a rich scattering AP-IRS channel (Rayleigh-fading $\F$), the number of energy beams required is generally reduced, thanks to the higher  channel correlation induced by the additional  phase shifts at the IRS.  Furthermore, in the case of deploying the IRS in LoS with the AP  (i.e., LoS $\F$), one can observe that only one energy  beam is needed for all the considered cases, due to the strong channel correlation induced by the AP-IRS rank-one channel.  Based on the above, it is inferred  that even for the rich-scattering (Rayleigh-fading) channel between the AP and  IRS, more reflecting elements help  reduce the number of energy beams required since more degrees of freedom at the IRS can be leveraged to reconfigure and align the effective channels among EUs. As such, the deployment of IRS in WPT systems not only effectively reduces  the transmit power but also  simplifies the transmitter design by reducing the number of energy beams required at the AP.

%Furthermore, one can observe that the LoS channel between the AP and the IRS enables the EUs to receive more RF energy than that under the Rayleigh fading channel between them. This fundamentally differs from the result in IRS-aided information transmission that the LoS AP-IRS channel generally degrades the performance due to the rank deficiency of $\F$ and the resultant more severe multiuser interference. In contrast, by deploying the IRS in LoS of the AP,  a rank-one $\F$ is resultant and the channels among EUs become highly correlated, thus rendering the use of only a single energy beam sufficient for serving multiple EUs simultaneously.

%Employing the IRS reduces  not only the transmit power at the AP but also the number of beams required. The reasons come from two aspects: 1) the effective channel power gain of each EU is increased which makes it easy to meet the corresponding energy harvesting constraint; 2)  the IRS induces strong correlation via properly designing the phase shifts (even for Rayleigh $\F$), i.e., reduce the rank of user effective channel, which facilitates the use of less beams to serve EUs.

\begin{table}[ht]
\caption{Results on the number of energy beams required: with (w/) IRS versus without (w/o)  IRS}\label{table:energybeam}\vspace{-0.4cm}
\centering
\newcommand{\tabincell}[2]
\small
\begin{tabular}{|l|l|l|l|l|l|l|l|l|l|}
\hline
\multicolumn{2}{|l|}{\multirow{2}{*}{}} & w/ IRS (LoS $\F$) & \multicolumn{3}{l|}{w/ IRS (Rayleigh-fading  $\F$)} & \multicolumn{4}{l|}{w/o IRS}          \\  \cline{3-10}
\multicolumn{2}{|l|}{}                  & $d_E=1$          & $d_E=1~\,\,~$                & $d_E=2$          & $d_E=3$         & $d_E=1$ & $d_E=2$ & $d_E=3$ & $d_E=4$ \\ \hline
\multicolumn{2}{|l|}{$K_E=10$}              & 500               & 499                   & 1                   & 0     & 73    & 426      & 1       & 0       \\ \hline
\multicolumn{2}{|l|}{$K_E=30$}              & 500               & 205                   & 294                   &1     & 0    & 37       & 439      & 24       \\ \hline
\multicolumn{2}{|l|}{$K_E=40$}              & 500               & 62                   & 415                  & 23      & 0   & 0       & 352      & 148      \\ \hline
\end{tabular}
\end{table}\vspace{-0.7cm}

\subsection{IRS-aided SWIPT}
Next, we consider the general case with both EUs and IUs coexisting in an IRS-aided SWIPT system.  First, by  assuming that only IRS-1 is deployed in Fig. \ref{simulation:setup:multiIRS:SWIPT},  we compare Algorithm 1 with other benchmark schemes  in subsection 1), and then study the impact of IRS on the SWIPT system in subsections 2) and 3). Finally,  the performance achieved by deploying both IRS-1 and IRS-2 is studied in subsection 4).
\subsubsection{AP Transmit Power versus Number of Reflecting Elements}
 In Fig. \ref{simulation:SWIPT:N}, we plot the transmit power required at the AP  versus the number of IRS reflecting elements  with $M=10$, $K_I=2$, $K_E=8$, $\gamma_0=20$ dB, and ${E}_0 = 10$ $\mu$W. The IUs and EUs are randomly located in their respective clusters (but without IRS-2 in Fig. \ref{simulation:setup:multiIRS:SWIPT}).  Besides the case without IRS, we also consider the case with discrete phase shifts at the IRS as well as a separate information-energy beam design for comparison.
%1) Separate beam design where the information and energy beams are separately designed.  since the IRS deployed around the EUs cannot contribute effective signals for the IUs,
Due to the large distance between IRS-1 and IUs,  IRS-reflected signals are negligible at IUs and thus can be ignored. As such, for the separate beam design,  the information beams are  first designed to minimize the transmit power required for satisfying all the SINR constraints by solving the following problem
\begin{align}  \label{MISO:MMSE}
&~\min_{\{\bm{w}_i\} } ~ \sum_{i\in \K_{\II}}\|\bm{w}_i\|^2  \\
\mathrm{s.t.} %&\sum_{i\in \K_{\II}}\|\bm{w}_i\|^2 + \sum_{i\in \K_{\E}}\|\bm{w}_j\|^2\leq P,  \\
&~~\text{SINR}_i = \frac{|{\bm{h}}^H_{d,i}\bm{w}_i |^2}{\sum\limits_{ k\neq i, k\in \K_{\II} }|{\bm{h}}^H_{d,i}\bm{w}_k |^2   + \sigma^2_i} \geq \gamma_i, \forall i \in \K_{\II}.
\end{align}
Then the energy beams and phase shifts are jointly optimized to minimize the transmit power subject to the energy harvesting constraints as well as the constraint of no interference to all IUs (provided $K_I \le M-1$), i.e., \eqref{P1:EHconstrn} and $\bm{h}^H_{d,i}{\mv v}_j ={0},\forall  i \in \mathcal{K_I}, \forall j\in \mathcal{K_{E}}$.  This problem  can be solved similarly by Algorithm 1 in Section III.

 \begin{figure}[ht]
\centering
\includegraphics[width=0.55\textwidth]{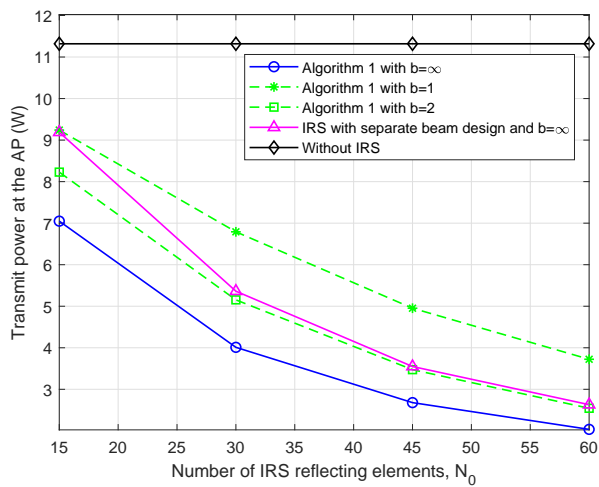}\vspace{-0.1cm}%paper_SU_dis_08102%SNR_distance3
\caption{Transmit power at the AP versus the number of IRS elements.} \label{simulation:SWIPT:N}\vspace{-0.4cm}
\end{figure}
From Fig. \ref{simulation:SWIPT:N}, it is first observed that  the use of discrete phase shifts at the IRS incurs performance loss as compared to the ideal case with continuous phase shifts, i.e., $b=\infty$, due to the misalignment of multiple reflected  signals.  However,  it still significantly outperforms the case without IRS, which shows the advantage of deploying IRS even with low-cost coarse phase-shifters  in practice. Furthermore, one can observe that the separate information-energy beam design suffers considerable performance loss as compared to the proposed joint design by Algorithm 1. It is worth pointing out that the performance gap between these two designs decreases as $N_0$ increases. This is expected since when the IRS's aperture becomes larger, the energy leakage from information beams  to IRS  also becomes more pronounced, and thus the transmit precoders of IUs in the joint design can be adjusted  to better serve IUs similarly  as the solution to problem \eqref{MISO:MMSE}.

\subsubsection{AP Transmit Power versus RF Receive Power Target at EUs}
 To further unveil the effect of IRS on the transmit precoder design at the AP, we consider the   benchmark schemes without sending dedicated energy beams, i.e., sending information beams only, for both cases with and without  IRS. These two optimization problems can be similarly solved by Algorithm 1. In Fig. \ref{simulation:SWIPT:power:vs:EHtarget}, we compare the  transmit power versus the RF receive power target at EUs with $K_I=1$, $K_E=16$, and $N_0=50$ (other parameters are set to be the same as in Fig. \ref{simulation:SWIPT:N}). Since $K_I=1$,  for the case without energy beam,  only one (information) beam is sent from the AP.  In addition, eight EUs are assumed to be located in the EU cluster and the other eight EUs are assumed to be evenly distributed on the left semicircle centered at the AP on the same plane with the radius of 8 m.   From Fig. \ref{simulation:SWIPT:power:vs:EHtarget}, it is observed that  for both the cases with and without energy beams, the transmit power reduction achieved by deploying the IRS  becomes more evident as the RF receive power target of EUs  increases.
  This is intuitive since when ${E}_{0}$ is very small, the energy leakage from the information beam  is already sufficient in most cases and deploying the IRS around EUs only brings marginal performance gain, whereas for high ${E}_{0}$, the usefulness of deploying the IRS for compensating the path loss is more evident.
More importantly, one can observe that by deploying the IRS, the transmit power saved by sending dedicated energy beams is largely reduced, which suggests that in most cases, sending the information beam only at the AP is already satisfactory. This is because without IRS, the AP sending one information beam only has to steer its beam direction to strike a balance between the IU and multiple EUs, which is inefficient for large $K_E$ with high ${E}_{0}$ (which is consistent with Table \ref{table:energybeam} where $d_E=1$ is generally suboptimal even for $K_E=10$ in the case without IRS). In contrast,  allowing the AP to send dedicated energy beams  helps resolve the above  issue and thus  achieves the so-called energy beamforming gain effectively.  However, by deploying IRS-1 around EUs, the effective channel power gains of EUs are significantly improved, which thus reduces the transmit power allocated for dedicated energy beams and in most cases, even the energy leakage from the properly steered information beam with optimized IRS  phase shifts is sufficient to meet the energy harvesting constraints for EUs. As such, the necessity   of sending dedicated energy beams is weakened in IRS-aided SWIPT systems, which implies that the deployment of IRS can  potentially simplify the transmit beamforming design at the AP if no energy beams are sent  as well as the receiver of  IUs since no energy signal cancelation component is needed.
 \begin{figure}[!t]
\centering
\includegraphics[width=0.55\textwidth]{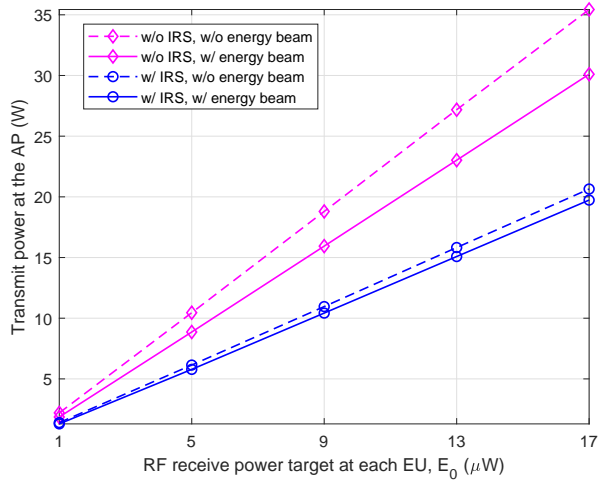}\vspace{-0.3cm}%paper_SU_dis_08102%SNR_distance3
\caption{Transmit power at the AP versus the  RF receive power target of EUs with $K_I=1$. } \label{simulation:SWIPT:power:vs:EHtarget}\vspace{-0.6cm}
\end{figure}

 \begin{figure}[ht]
\centering
\includegraphics[width=0.55\textwidth]{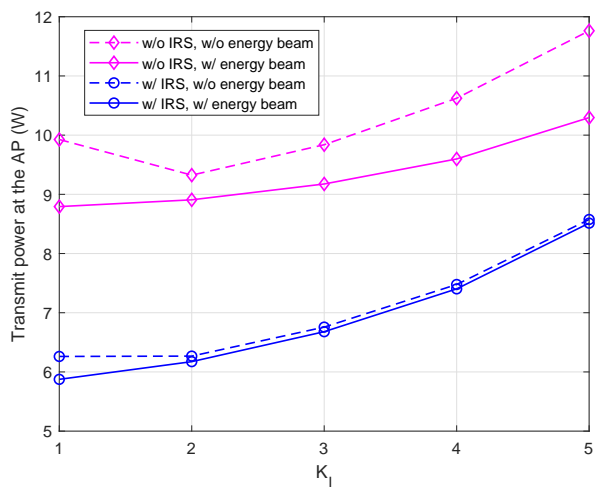}\vspace{-0.3cm}%paper_SU_dis_08102%SNR_distance3
\caption{Transmit power at the AP versus the number of IUs.} \label{simulation:SWIPT:Ki}\vspace{-0.4cm}
\end{figure}

\subsubsection{AP Transmit Power versus Number of IUs}
Based on the same setup in Fig. \ref{simulation:SWIPT:power:vs:EHtarget}, we gradually  increase the number of IUs (which are randomly located in the IU cluster) to study its effect on the transmit power required at the AP with ${E}_0 = 5$ $\mu$W for EUs, shown in Fig. \ref{simulation:SWIPT:Ki}. First, one can  observe that with the deployment of the IRS, sending dedicated energy beams only brings negligible transmit power reduction, especially when $K_I$ is large, which is consistent with our discussion for Fig. \ref{simulation:SWIPT:power:vs:EHtarget}.  Besides, it is  observed from Fig. \ref{simulation:SWIPT:Ki} that without the IRS,  increasing $K_I$ from 1 to 2 even reduces the transmit power required for the case without energy beam, in contrast to the case with energy beam where adding IUs generally requires higher transmit power at the AP. This is because with $K_I=2$, the AP can send two information beams and thus has higher flexibility to balance the RF receive power among EUs,  resulting in higher beamforming gain and thus  lower transmit power as compared to $K_I=1$.  However, this phenomenon is not observed  when the IRS is deployed around EUs due to the additional IRS  beamforming gain.
Last, one can observe from Fig. \ref{simulation:SWIPT:Ki} that as $K_I$ increases, the performance gain achieved by deploying the IRS around EUs decreases, especially for the case with energy beamforming. This is expected since IUs gradually become the performance bottleneck of the SWIPT system due to the more  severe multiuser interference and deploying an IRS around EUs alone is no more  sufficient. This implies that in practice, additional IRS may need to be deployed around IUs if the number of IUs is large and/or their SINR target becomes high, which will be evaluated in the next subsection.

 \begin{figure}[ht]
\centering
\includegraphics[width=0.55\textwidth]{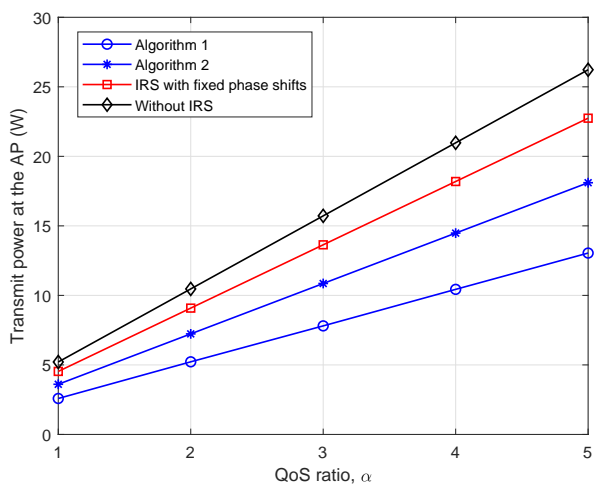}\vspace{-0.3cm}%paper_SU_dis_08102%SNR_distance3
\caption{Transmit power at the AP versus the QoS ratio.  } \label{simulation:multipleIRS:EH}\vspace{-0.4cm}
\end{figure}

\subsubsection{AP Transmit Power versus QoS Ratio}Finally, we study the multi-IRS aided SWIPT system shown in Fig. \ref{simulation:setup:multiIRS:SWIPT} where IRS-2 is further deployed to improve the performance of IUs. Motivated by \cite{JR:wu2018IRS}, we assume  Rayleigh fading channel model between the AP and IRS-2 to reap the spatial multiplexing gain for IUs.  Besides, we consider $K_I=6$ and  $K_E=8$ with two IUs (far from IRS-2) and two EUs (far from IRS-1) located in $(d_x-100 ,0,0)$, $(d_x+100 ,0,0)$, $(d_x-8 ,0,0)$, and $(d_x+8 ,0,0)$, respectively. The rest  IUs and EUs are randomly distributed in their respective clusters.
In Fig. \ref{simulation:multipleIRS:EH}, we compare the transmit power required by Algorithm 1 and  the low-complexity Algorithm 2  versus the QoS ratio, denoted by $\alpha \geq 1$, with $M=10$ and $N_0=40$. Specifically, we set  $\gamma_0 = \alpha {\bar \gamma}_0$ (in linear scale) and $E_0= \alpha {\bar E}_0$ with  $ {\bar \gamma}_0= 10$ and $ {\bar E}_0=4$ $\mu$W. By increasing the value of $\alpha$, both the SINR and RF receive power requirements of IUs and EUs increase. From Fig. \ref{simulation:multipleIRS:EH}, it is observed that although the low-complexity Algorithm 2 suffers some performance loss as compared to the penalty-based Algorithm 1, it still significantly outperforms the case without IRS as well as the case with fixed phase shifts at IRSs. This further demonstrates the effectiveness of IRSs in enhancing the performance of SWIPT systems with both IU and EU hot spots.

\section{Conclusions}

In this paper, we investigate a new QoS-constrained  beamforming optimization problem for IRS-aided SWIPT.  Specifically, the active transmit precoders  at the AP and the passive reflect phase shifts at multiple  IRSs are jointly optimized to minimize the transmit power at the AP subject to both the SINR constraints at  IUs and energy harvesting constraints at EUs. We propose two algorithms to achieve a balance between the system performance and the computational complexity. In particular, the proposed penalty-based algorithm is shown to yield the best performance as compared to other existing and benchmark schemes. Furthermore, it is  able to handle  the practical case with discrete phase shifts as well.  Simulation results validate the effectiveness of IRS for WPT range extension and transmit power saving under SWIPT constraints. Besides, it is found that the deployment of IRS can effectively reduce the number of energy beams required for WPT/SWIPT systems.

%\bibliographystyle{IEEEtran}
%\bibliography{IEEEabrv,mybib}

% Generated by IEEEtran.bst, version: 1.13 (2008/09/30)
\begin{thebibliography}{10}
\providecommand{\url}[1]{#1}
\csname url@samestyle\endcsname
\providecommand{\newblock}{\relax}
\providecommand{\bibinfo}[2]{#2}
\providecommand{\BIBentrySTDinterwordspacing}{\spaceskip=0pt\relax}
\providecommand{\BIBentryALTinterwordstretchfactor}{4}
\providecommand{\BIBentryALTinterwordspacing}{\spaceskip=\fontdimen2\font plus
\BIBentryALTinterwordstretchfactor\fontdimen3\font minus
  \fontdimen4\font\relax}
\providecommand{\BIBforeignlanguage}[2]{{%
\expandafter\ifx\csname l@#1\endcsname\relax
\typeout{** WARNING: IEEEtran.bst: No hyphenation pattern has been}%
\typeout{** loaded for the language `#1'. Using the pattern for}%
\typeout{** the default language instead.}%
\else
\language=\csname l@#1\endcsname
\fi
#2}}
\providecommand{\BIBdecl}{\relax}
\BIBdecl

\bibitem{zeng2017communications}
Y.~Zeng, B.~Clerckx, and R.~Zhang, ``Communications and signals design for
  wireless power transmission,'' \emph{IEEE Trans. Commun.}, vol.~65, no.~5,
  pp. 2264--2290, May 2017.

\bibitem{clerckx2018fundamentals}
B.~Clerckx, R.~Zhang, R.~Schober, D.~W.~K. Ng, D.~I. Kim, and H.~V. Poor,
  ``Fundamentals of wireless information and power transfer: {F}rom {RF} energy
  harvester models to signal and system designs,'' \emph{IEEE J. Sel. Areas
  Commun.}, vol.~37, no.~1, pp. 4--33, Jan. 2019.

\bibitem{yang2015throughput}
G.~Yang, C.~K. Ho, R.~Zhang, and Y.~L. Guan, ``Throughput optimization for
  massive {MIMO} systems powered by wireless energy transfer,'' \emph{IEEE J.
  Sel. Areas Commun.}, vol.~33, no.~8, pp. 1640--1650, Aug. 2015.

\bibitem{yang2018wireless}
L.~Yang, Y.~Zeng, and R.~Zhang, ``Wireless power transfer with hybrid
  beamforming: How many {RF} chains do we need?'' \emph{IEEE Trans. Wireless
  Commun.}, vol.~17, no.~10, pp. 6972--6984, Oct. 2018.

\bibitem{JR:wu2019IRSmaga}
Q.~Wu and R.~Zhang, ``Towards smart and reconfigurable environment: Intelligent
  reflecting surface aided wireless networks,'' \emph{IEEE Commun. Mag.}, to
  appear. [Online] Available: \url{https://arxiv.org/abs/1905.00152}.

\bibitem{JR:wu2018IRS}
------, ``Intelligent reflecting surface enhanced wireless network via joint
  active and passive beamforming,'' \emph{{IEEE} Trans. Wireless Commun.},
  {DOI}10.1109/TWC.2019.2936025, Aug. 2019.

\bibitem{di2019smart}
M.~Di~Renzo \emph{et~al.}, ``Smart radio environments empowered by {AI}
  reconfigurable meta-surfaces: {A}n idea whose time has come,'' \emph{EURASIP
  J. on Wireless Commun. and Netw.}, {M}ay 2019. [Online] Available:
  https://doi.org/10.1186/s13638-019-1438-9.

\bibitem{basar19_survey}
E.~{Basar}, M.~{Di Renzo}, J.~{De Rosny}, M.~{Debbah}, M.~{Alouini}, and
  R.~{Zhang}, ``Wireless communications through reconfigurable intelligent
  surfaces,'' \emph{IEEE Access}, vol.~7, pp. 116\,753--116\,773, Aug. 2019.

\bibitem{cui2019secure}
M.~Cui, G.~Zhang, and R.~Zhang, ``Secure wireless communication via intelligent
  reflecting surface,'' \emph{IEEE Wireless Commun. Lett.},
  {DOI}10.1109/LWC.2019.2919685, May. 2019.

\bibitem{guan2019intelligent}
X.~Guan, Q.~Wu, and R.~Zhang, ``Intelligent reflecting surface assisted secrecy
  communication via joint beamforming and jamming,'' [Online] Available:
  \url{https://arxiv.org/abs/1907.12839}.

\bibitem{chen2019intelligent}
J.~Chen, Y.-C. Liang, Y.~Pei, and H.~Guo, ``Intelligent reflecting surface: {A}
  programmable wireless environment for physical layer security,'' \emph{IEEE
  Access}, vol.~7, pp. 82\,599--82\,612, Jul. 2019.

\bibitem{dongfang2019}
D.~Xu, X.~Yu, Y.~Sun, D.~W.~K. Ng, and R.~Schober, ``Resource allocation for
  secure {IRS}-assisted multiuser {MISO} systems,'' [Online] Available:
  \url{https://arxiv.org/abs/1907.03085}.

\bibitem{fu2019intelligent}
M.~Fu, Y.~Zhou, and Y.~Shi, ``Intelligent reflecting surface for downlink
  non-orthogonal multiple access networks,'' [Online] Available:
  \url{https://arxiv.org/abs/1906.09434}.

\bibitem{yang2019intelligent}
G.~Yang, X.~Xu, and Y.-C. Liang, ``Intelligent reflecting surface assisted
  non-orthogonal multiple access,'' [Online] Available:
  \url{https://arxiv.org/abs/1907.03133}.

\bibitem{han2018large}
Y.~Han, W.~Tang, S.~Jin, C.-K. Wen, and X.~Ma, ``Large intelligent
  surface-assisted wireless communication exploiting statistical {CSI},''
  \emph{IEEE Trans. Veh. Technol.}, vol.~68, no.~8, pp. 8238--8242, Aug. 2019.

\bibitem{huangachievable}
C.~Huang, A.~Zappone, M.~Debbah, and C.~Yuen, ``Achievable rate maximization by
  passive intelligent mirrors,'' in \emph{Proc. IEEE ICASSP}, 2018.

\bibitem{yan2019passive}
W.~Yan, X.~Kuai, and X.~Yuan, ``Passive beamforming and information transfer
  via large intelligent surface,'' [Online] Available:
  \url{https://arxiv.org/abs/1905.01491}.

\bibitem{JR:yang2019irs}
Y.~Yang, B.~Zheng, S.~Zhang, and R.~Zhang, ``Intelligent reflecting surface
  meets {OFDM}: {P}rotocol design and rate maximization,'' [Online] Available:
  \url{https://arxiv.org/abs/1906.09956}.

\bibitem{JR:beixiong2019irs}
B.~Zheng and R.~Zhang, ``Intelligent reflecting surface enhanced {OFDM}:
  {C}hannel estimation and reflection optimization,'' [Online] Available:
  \url{https://arxiv.org/abs/1909.03272}.

\bibitem{JR:wu2019discreteIRS}
Q.~Wu and R.~Zhang, ``Beamforming optimization for wireless network aided by
  intelligent reflecting surface with discrete phase shifts,'' [Online]
  Available: \url{https://arxiv.org/abs/1906.03165}.

\bibitem{huang2018energy}
C.~Huang, G.~C. Alexandropoulos, A.~Zappone, M.~Debbah, and C.~Yuen, ``Energy
  efficient multi-user {MISO} communication using low resolution large
  intelligent surfaces,'' in \emph{Proc. IEEE GLOBECOM}, 2018.

\bibitem{abeywickrama2019intelligent}
S.~Abeywickrama, R.~Zhang, and C.~Yuen, ``Intelligent reflecting surface:
  Practical phase shift model and beamforming optimization,'' [Online]
  Available: \url{https://arxiv.org/abs/1907.06002}.

\bibitem{mishra2019channel}
D.~Mishra and H.~Johansson, ``Channel estimation and low-complexity beamforming
  design for passive intelligent surface assisted {MISO} wireless energy
  transfer,'' in \emph{Proc. IEEE ICASSP}, 2019.

\bibitem{wu2019weighted}
Q.~Wu and R.~Zhang, ``Weighted sum power maximization for intelligent
  reflecting surface aided {SWIPT},'' [Online] Available: \url{
  https://arxiv.org/abs/1907.05558 }.

\bibitem{pan2019intelligent}
C.~Pan, H.~Ren, K.~Wang, M.~Elkashlan, A.~Nallanathan, J.~Wang, and L.~Hanzo,
  ``Intelligent reflecting surface enhanced {MIMO} broadcasting for
  simultaneous wireless information and power transfer,'' [Online] Available:
  \url{https://arxiv.org/abs/1908.04863}.

\bibitem{nayeri2018reflectarray}
P.~Nayeri, F.~Yang, and A.~Z. Elsherbeni, \emph{Reflectarray antennas:
  {T}heory, designs, and applications}.\hskip 1em plus 0.5em minus 0.4em\relax
  John Wiley \& Sons, 2018.

\bibitem{kaina2014shaping}
N.~Kaina, M.~Dupr{\'e}, G.~Lerosey, and M.~Fink, ``Shaping complex microwave
  fields in reverberating media with binary tunable metasurfaces,'' \emph{Sci.
  Rep.}, vol.~4, p. 6693, Oct. 2014.

\bibitem{xu2014multiuser}
J.~Xu, L.~Liu, and R.~Zhang, ``Multiuser {MISO} beamforming for simultaneous
  wireless information and power transfer,'' \emph{{IEEE} Trans. Signal
  Process.}, vol.~62, no.~18, pp. 4798--4810, Sep. 2014.

\bibitem{bertsekas1999nonlinear}
D.~P. Bertsekas, \emph{Nonlinear Programming}.\hskip 1em plus 0.5em minus
  0.4em\relax Athena Scientific, 1999.

\bibitem{Boyd}
S.~Boyd and L.~Vandenberghe, \emph{Convex Optimization}.\hskip 1em plus 0.5em
  minus 0.4em\relax Cambridge University Press, 2004.

\bibitem{shi2016joint}
Q.~Shi, M.~Hong, X.~Gao, E.~Song, Y.~Cai, and W.~Xu, ``Joint source-relay
  design for full-duplex {MIMO} {AF} relay systems,'' \emph{IEEE Trans. Signal
  Process.}, vol.~64, no.~23, pp. 6118--6131, Dec. 2016.

\bibitem{zhao2019decoding}
M.-M. Zhao, Q.~Shi, Y.~Cai, M.-J. Zhao, and Q.~Yu, ``Decoding binary linear
  codes using penalty dual decomposition method,'' \emph{IEEE Commun. Lett.},
  vol.~23, no.~6, pp. 958--962, Jun. 2019.

\end{thebibliography}

\end{document}